\def\tiltheta{\tilde{\theta}}
\def\myref #1{(\ref{#1})}
\def\bartheta{\bar{\theta}}
\def\tillambda{\bar{\lambda}}
\def\tilmu{\bar{\mu}}
\def\bR{\mathbb{R}}
\def\cG{\mathcal{G}}
\def\bee{\begin{equation}}
\def\ene{\end{equation}}
\def\beq{\begin{eqnarray}}
\def\enq{\end{eqnarray}}
\newtheorem{myass}{Assumption}
\newtheorem{mylem}{Lemma}
\newtheorem{myrem}{Remark}
\newtheorem{mythm}{Theorem}
\newtheorem{myprop}{Proposition}
\def\barx{\bar{x}}
\begin{document}
\title{Distributed Convex Optimization with Inequality Constraints over Time-varying Unbalanced Digraphs}
\author{Pei Xie, Keyou You, Roberto Tempo, Shiji Song and Cheng Wu \thanks{This work was supported by the National Natural Science Foundation of China (61304038),  Tsinghua University Initiative Scientific Research Program, and CNR International Joint Lab COOPS.} \thanks{P. Xie, K. You, S. Song and C. Wu are with the Department of Automation and TNList, Tsinghua University, 100084, China (emails: xie-p13@mails.tsinghua.edu.cn, \{youky, shijis, cwu\}@tsinghua.edu.cn).} \thanks{R. Tempo is with CNR-IEIIT, Politecnico di Torino, Torino, 10129, Italy (email: roberto.tempo@polito.it).}}

\maketitle
\begin{abstract}
This paper considers a distributed convex optimization problem with inequality constraints over {\em time-varying unbalanced} digraphs, where the cost function is a sum of local objectives, and each node of the graph only knows its local objective and inequality constraints. Although there is a vast literature on distributed optimization, most of them require the graph to be balanced, which is quite restrictive and not necessary. Very recently, the unbalanced problem has been resolved only for either time-invariant graphs or unconstrained optimization. This work addresses the unbalancedness by focusing on an epigraph form of the constrained optimization. A striking feature is that this novel idea can be easily used to study time-varying unbalanced digraphs. Under local communications, a simple iterative algorithm is then designed for each node. We prove that if the graph is uniformly jointly strongly connected, each node asymptotically converges to some common optimal solution.
\end{abstract}
\begin{IEEEkeywords} Distributed algorithms, constrained optimization, time-varying  unbalanced digraphs, epigraph form, random-fixed projected algorithm.\end{IEEEkeywords}

\section{INTRODUCTION}
Over the last decades, a paradigm shift from centralized processing to highly distributed systems has excited interest due to the increasing development in interactions between computers, microprocessors and sensors. This work considers a distributed constrained optimization problem over graphs, where each node only accesses its local objective and constraints.  It arises in network congestion problems, where routers individually optimize their flow rates to minimize the latency along their routes in a distributed way. Other applications include non-autonomous power control, resource allocation, cognitive networks, statistical inference, and machine learning. Without a central coordination unit, each node is unable to obtain the overall information of the optimization problem.   Different from the vast body of literature, we focus on time-varying unbalanced digraphs for constrained optimization problems.

Distributed optimization over graphs has been extensively studied in recent years, see \cite{boyd2011distributed,nedic2015distributed,xi2015distributed} and references therein.  In the seminal work, the authors of \cite{nedic2009distributed} propose a distributed gradient descend (DGD) algorithm by designing a communication protocol to achieve {\em consensus} (or reach an agreement) among nodes, and using (sub)-gradients to individually  minimize each local objective. However, this algorithm relies on {\em balanced} graphs via a sequence of doubly-stochastic matrices, which is critical to the success of the algorithm.  It surely limits the applicability of the algorithm in practice, especially for time-varying digraphs, as the balanced condition is not easy to check in a distributed way. For unbalanced digraphs, the DGD algorithm actually minimizes a {\em weighted} sum of local objectives, rather than the sum in the optimization.

To resolve it, the authors of \cite{xi2015distributed}  adopt the ``surplus-based'' idea from \cite{cai2012average} by augmenting an auxiliary vector for each node to record the state updates, the goal of which is to achieve {\em average consensus} over time-invariant digraphs.  Although this idea has been extended to time-varying digraphs  in \cite{xu2015fully}, it only addresses the resource allocation problem. In \cite{moral2016Perronestimate},  an additional distributed algorithm is designed to compensate unbalancedness. As explicitly pointed out by the author, it is not directly applicable to time-varying digraphs. It is worthy mentioning that the distributed algorithm of alternating directed method of multipliers  \cite{iutzeler2016explicit} even requires the graph to be undirected.

For time-varying unbalanced digraphs, the authors of \cite{nedic2015distributed} propose an algorithm by combining the gradient descent and the push-sum mechanism, which is primarily designed to achieve average-consensus on unbalanced digraphs.  Yet, this algorithm only works for {\em unconstrained} optimization as push-sum based updates do not preserve feasibility of the iterates. Moreover, it is complicated and involves multiple nonlinear iterations, which also appears in the algorithm of \cite{sun2016distributed}. More importantly, each node requires to solve an optimization problem per iteration in \cite{sun2016distributed}.

Overall, the problem is largely open for the distributed constrained optimization over time-varying unbalanced digraphs. To solve these issues simultaneously, we introduce an epigraph form of the constrained optimization problem to convert the objective function to a linear form, which is in sharp contrast with the existing approaches to resolve unbalancedness. The advantage of the epigraph form is very clear as it eliminates the differences among local objective functions, and thus can easily handle unbalancedness. To design a novel two-step recursive algorithm, we firstly solve an unconstrained optimization problem without the decoupled constraints in the epigraph form by using a standard distributed sub-gradient descent and obtain an intermediate state vector in each node. Secondly, the intermediate state vector of each node is moved toward the intersection of its local constraint sets by using the distributed version of Polyak random algorithm, which is proposed in \cite{you2016scenario}. We further introduce an additional ``projection'' toward a fixed direction to improve the transient performance. More importantly, this paper significantly extends the algorithm in \cite{you2016scenario} to the case of time-varying graphs. Our algorithm converges to an optimal solution if the time-varying unbalanced digraphs are uniformly jointly strongly connected.

The rest of this paper is organized as follows. In Section \ref{Formulation}, we formulate the distributed constrained optimization and review the existing works to motivate our ideas. In Section \ref{Algorithm}, the epigraph form of the original optimization is introduced to attack the unbalanced issue, and a random-fixed projected algorithm is then designed to distributedly solve the reformulated optimization. In Section \ref{Convergence}, the convergence of the proposed algorithm is rigorously proved. Finally, some concluding remarks are drawn in Section \ref{Conclusion}.

\textbf{Notation}: For two vectors $a = \left[a_1,...,a_n\right]^T$ and $b = \left[b_1,...,b_n\right]^T$, the notation $a\preceq b$ means that $a_i\leq b_i$ for any $i\in\{1,...,n\}$. Similar notation is used for $\prec$, $\succeq$ and $\succ$. The symbols $1_n$ and $0_n$ denotes the vectors with all entries equal to one and zero respectively, and $e_j$ denotes a unit vector with the $j$th element equals to one. For a matrix $A$, we use $\|A\|$ and $\rho(A)$ to represent its reduced norm and spectral radius respectively. $\otimes$ denotes the Kronecker product. The subdifferential of  $f$ at $x\in\mathbb{R}^m$ is denoted by $\partial f(x)$. Finally, $f(\theta)_+=\max\{0,f(\theta)\}$ takes the nonnegative part of $f$.

\section{Problem Formulation and Motivation}
\label{Formulation}
\subsection{Distributed Contrained Optimization}
This work considers a network of $n$ nodes to collaboratively solve a convex optimization with inequality constraints,
\begin{align}
\min_{x\in X}\quad&F(x)\triangleq\sum_{i=1}^n f_i(x),\notag\\
\label{equ:p0def}
\text{subject to}\quad&g_i(x)\preceq0,\text{ }i=1,2,...,n,
\end{align}
where $ X\in\mathbb{R}^m$ is a common convex set to all nodes, while $f_i: \mathbb{R}^m\rightarrow\mathbb{R}$ is a convex function only known by node $i$. Moreover, only node $i$ is aware of its local constraints $g_i(x)\preceq0$, where $g_i(x)=\left[g_i^{1}(x),...,g_i^{\tau_i}(x)\right]^T\in\mathbb{R}^{\tau_i}$ is a vector of convex functions.

A digraph $\mathcal{G}=\left(\mathcal{V},\mathcal{E}\right)$ is introduced to describe interactions between nodes, where $\mathcal{V}:=\{1,...,n\}$ denotes the set of nodes, and the set of interaction links is represented by $\mathcal{E}$. A directed edge $(i,j)\in\mathcal{E}$ if node $i$ can directly receive information from node $j$. We define $\mathcal{N}_i^{in}=\{j|(i,j)\in\mathcal{E}\}$ as the collection of in-neighbors of node $i$, i.e., the set of nodes directly send information to node $i$. The out-neighbors $\mathcal{N}_i^{out}=\{j|(j,i)\in\mathcal{E}\}$ are defined similarly. Each node is included in its out-neighbors and in-neighbors. Node $i$ is \textit{reachable} by node $j$ if there exist $i_1,...,i_p\in\mathcal{V}$ such that $(i,i_1),...,(i_{k-1},i_k),...,(i_p,j)\in\mathcal{E}$.  $\mathcal{G}$ is \textit{strongly connected} if every node is reachable by all nodes. A weighting matrix adapted to $\mathcal{G}$ is defined as $A$ = $\{a_{ij}\}\in\mathbb{R}^{n\times n}$, which satisfies that $a_{ij}>0$ if $(i,j)\in\mathcal{E}$ and $a_{ij}=0$, otherwise.  $\mathcal{G}$ is said to be \textit{balanced} if $\sum_{j\in\mathcal{N}_i^{out}}a_{ji}=\sum_{j\in\mathcal{N}_i^{in}}a_{ij}$ for any $i\in\mathcal{V}$, and \textit{unbalanced}, otherwise. Moreover, $A$ is \textit{row-stochastic} if $\sum_{j=1}^na_{ij}=1$ for any $i\in\mathcal{V}$, \textit{column-stochastic}, if $\sum_{i=1}^na_{ij}=1$ for any $j\in\mathcal{V}$, and \textit{doubly-stochastic} if it is both row-stochastic and column-stochastic. If the links among nodes change with time, we use $\{\mathcal{G}^k\}=\{\left(\mathcal{V},\mathcal{E}^k\right)\}$ and $\{A^k\}$ to represent time-varying graphs and the associated weighting matrices.

The objective of this work is to design a recursive algorithm to distributedly solve problem \myref{equ:p0def} over time-varying unbalanced digraphs. That is, each node $i$ communicates with its neighbors and locally updates a vector $x_i^k$ so that each $x_i^k$ eventually converges to some common optimal solution.

\subsection{Distributed Gradient Descend Algorithms}
In the standard DGD algorithm \cite{nedic2009distributed}, each node $i$ updates its local estimate of an optimal solution by
\begin{equation}
\label{equ:dgd}
x_i^{k+1}=\sum_{j=1}^na_{ij}x_j^k-\zeta^k\nabla f_i,
\end{equation}
where $\zeta^k$ is a given step size.

However, the DGD is only able to solve the optimization problem over balanced digraphs, which is not applicable to unbalanced graphs. To illustrate this point, we utilize the \textit{Perron vector} \cite{horn2012matrix} of a weighting matrix $A$ as follows.

\begin{mylem}[Perron vector]\label{lemma:perron} If $\mathcal{G}$ is a strongly-connected digraph and $A$ is the associated row-stochastic weighting matrix, there exists a Perron vector $\pi\in\mathbb{R}^n$ such that
\begin{equation}
\label{equ:l1res1}
\pi^TA=\pi^T, \pi^T1_n=1,~\text{and}~\pi_i>0.
\end{equation}
\end{mylem}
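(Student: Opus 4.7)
The plan is to recognize the claim as a direct specialization of the Perron--Frobenius theorem for irreducible nonnegative matrices to the row-stochastic case. I would organize the argument into three short steps: identifying $1$ as the spectral radius of $A$, converting strong connectivity of $\mathcal{G}$ into irreducibility of $A$, and choosing an appropriate normalization.

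First, I would use row-stochasticity to observe that $A 1_n = 1_n$, so $1$ is an eigenvalue of $A$; combined with the elementary bound $\rho(A) \le \|A\|_{\infty} = 1$ available for any nonnegative matrix whose rows sum to one, this pins down $\rho(A) = 1$. Thus the problem reduces to producing a strictly positive left eigenvector associated with the spectral radius.

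Second, I would argue that strong connectivity of $\mathcal{G}$ is precisely irreducibility of $A$, since by construction of the weighting matrix $a_{ij} > 0$ exactly when $(i,j) \in \mathcal{E}$, so the directed graph of the support of $A$ coincides with $\mathcal{G}$. Invoking Perron--Frobenius for irreducible nonnegative matrices then supplies a strictly positive left eigenvector $\tilde\pi$ for the simple eigenvalue $\rho(A) = 1$, i.e., $\tilde\pi^T A = \tilde\pi^T$ with $\tilde\pi_i > 0$. A final rescaling $\pi = \tilde\pi / (\tilde\pi^T 1_n)$ preserves positivity and delivers the normalization $\pi^T 1_n = 1$, while the eigenvalue identity $\pi^T A = \pi^T$ is invariant under scaling.

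I do not anticipate any genuine obstacle, since every step is textbook material. The only point that deserves an explicit line of justification is the equivalence between strong connectivity of $\mathcal{G}$ and irreducibility of $A$, which follows at once from the definition of a weighting matrix adapted to $\mathcal{G}$. For the heavy lifting I would simply cite the matrix-analysis reference \cite{horn2012matrix} already flagged in the lemma statement, so that Perron--Frobenius itself need not be reproved here.
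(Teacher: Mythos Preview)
Your proposal is correct and follows the standard Perron--Frobenius route. Note, however, that the paper itself supplies no proof for this lemma: it is stated as a known fact with the citation to \cite{horn2012matrix}, so your sketch in fact provides more detail than the paper does, and is exactly the argument one would find in that reference.
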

By multiplying $\pi_i$ in \myref{equ:l1res1} on both sides of \myref{equ:dgd} and summing up over $i$, we obtain that
\begin{align}
\label{equ:sdgd}
  \barx^{k+1}&\triangleq\sum_{i=1}^n\pi_ix_i^{k+1} \notag\\
             &=\sum_{j=1}^n\big(\sum_{i=1}^n\pi_i a_{ij}\big)x_j^k-\zeta^k\sum_{i=1}^n\pi_i\nabla f_i(x_i^k)\notag\\
             &=\barx^{k}-\zeta^k\sum_{i=1}^n\pi_i\nabla f_i(x_i^k).
\end{align}
If all nodes have reached consensus, then \myref{equ:sdgd} is written as
\begin{equation}
\label{equ:cgd}
\barx^{k+1}=\barx^k-\zeta^k\sum_{i=1}^n\pi_i\nabla f_i(\barx^k).
\end{equation}
Clearly, \myref{equ:cgd} is a DGD algorithm to minimize a new function
\begin{equation}
\label{barf}
\bar{F}(x)\triangleq\sum_{i=1}^n\pi_if_i(x),
\end{equation}
and each node converges to a minimizer of $\bar{F}(x)$ rather than $F(x)$ in \myref{equ:p0def}, which is also noted in \cite{xi2015distributed}. For a generic unbalanced digraph, the weighting matrix is no longer doubly-stochastic, and the Perron vector is not equal to $\left[\frac{1}{n},...,\frac{1}{n}\right]^T$,  which obviously implies that $\bar{F}(x)\neq F(x)$. That is, DGD in \myref{equ:dgd} can not be applied to the case of unbalanced graphs.

If each node $i$ is able to access its associated element of the Perron vector $\pi_i$, it follows from \myref{equ:cgd} that a natural way to modify the DGD in \myref{equ:dgd} is given as
$$
x_i^{k+1}=\sum_{j=1}^na_{ij}x_j^k-\frac{\zeta^k}{\pi_i}\nabla f_i^k,
$$
which is recently exploited in \cite{moral2016Perronestimate} by designing an additional distributed algorithm to locally estimate $\pi_i$. However, it is not directly applicable to time-varying graphs as {\em constant} Perron vector does not exist. In fact, this shortcoming has also been explicitly pointed out in \cite{moral2016Perronestimate}. Another idea to resolve the unbalanced problem is to augment the original row-stochastic matrix into a doubly-stochastic matrix. This novel approach is originally proposed by \cite{cai2012average} for average consensus problems over unbalanced graphs. The key idea is to augment an additional variable for each agent, called ``surplus",  which is used to locally record individual state updates. In \cite{xi2015distributed},  the ``surplus-based'' idea is adopted to solve the distributed optimization problem over fixed unbalanced graphs. Again, it is unclear how to use the ``surplus-based'' idea to  solve the distributed optimization problem over time-varying unbalanced digraphs. This problem has been resolved in \cite{nedic2015distributed}  by adopting the so-called push-sum consensus protocol, the goal of which is to achieve the average consensus over unbalanced graphs. Unfortunately, the algorithms appear to be over complicated and involve nonlinear iterations.  More importantly,  they are restricted to the unconstrained optimization, and the algorithm is not as intuitive as the DGD.

This work solves the unbalanced problem from a different perspective, which can easily address the constrained optimization over time-varying unbalanced digraphs.
\section{Distributed Algorithms Over Time-varying Graphs for Constrained Optimization}
\label{Algorithm}
As noted, perhaps it is not effective to attack the unbalanced problem via the Perron vector. To overcome this limitation, we study the epigraph form of the optimization \myref{equ:p0def}, and obtain the same linear objective function in every node. This eliminates the effect of different elements of the Perron vector on the limiting point of \myref{equ:cgd}. Then we utilize the DGD in \myref{equ:dgd} to resolve the epigraph form and obtain an intermediate state vector. The feasibility of the local estimate in each node is asymptotically guaranteed by further driving this vector toward the constraint set. That is,  we update  the intermediate vector toward the negative sub-gradient direction of a local constraint function.  This novel idea is in fact proposed in the recent work \cite{you2016scenario}, which however only focuses on time-invariant digraphs. In this work, we extend it to time-varying case.
\subsection{Epigraph form of the constrained optimization}
Our main idea does not focus on $\pi_i$ but on $f_i$ in \myref{equ:cgd}. Specifically, if we transform all the local objective $f_i(x)$ to the same form $f_0(x)$, then \myref{equ:cgd} is reduced to $\barx^{k+1}=\barx^k-\zeta^k\nabla f_0(\barx^k)$, which implies that there is no difference balanced and unbalanced digraphs. We shall exploit this idea to design our distributed algorithm. 

Given $f(x):\mathbb{R}^m\rightarrow\mathbb{R}$, the epigraph of $f$ is defined as
\begin{equation*}
\text{epi} f =\{(x,t)|x\in \text{dom} f, f(x)\leq t\},
\end{equation*}
which is a subset of $\mathbb{R}^{m+1}$. It follows from \cite{bertsekas2015convex} that the epigraph of $f$ is a convex set if and only if $f$ is convex, and minimizing $f$ is equal to searching the minimal auxiliary variable $t$ within the epigraph. This  transforms the optimization problem of minimizing a convex objective to minimizing a \textit{linear function} within a convex set. In the case of multiple functions, it can be defined similarly via multiple auxiliary variables.

Now, we consider the epigraph form of \myref{equ:p0def} by using an auxiliary vector $t\in\mathbb{R}^n$. It is clear that problem \myref{equ:p0def} can be reformulated as
\begin{align}
\min_{(x,t)\in\Theta}\quad&\sum_{i=1}^n 1_n^Tt/n,\notag\\
\text{subject to}\quad&f_i(x)-e_i^Tt\leq 0,\notag\\
\label{equ:p1def}
&g_i(x)\preceq0,\quad i=1,2,...,n,
\end{align}
where $\Theta= X\times\mathbb{R}^n$ is the Cartesian product of $ X$ and $\mathbb{R}^n$. Clearly, local objectives are transformed to the same linear function $f_0(x,t)=1_n^Tt/n$.
\begin{myrem}\label{rem:epi} In view of the epigraph form, we have the following comments.
\begin{enumerate}\renewcommand{\labelenumi}{\rm(\alph{enumi})}
\item Denote $y=[x^T,t^T]^T$ and $f_0(y)=c^Ty/n$, where $c=[0_m^T,1_n^T]^T$. Thus, the objective in \myref{equ:p1def} becomes the sum of the local objective $f_0$, which is the same for all nodes. Therefore, $\bar{F}(y)$ in \myref{barf} is reduced to $\bar{F}(y)=nf_0(y)=F(y)$. In this way, we lift the limitations of DGD and make it effective for unbalanced graphs.
\item The local objective $f_i(x)$ in \myref{equ:p0def} is handled via an additional constraint in \myref{equ:p1def} such that $\tilde{f}_i(y)= f_i(x)-e_i^Tt\leq0$, where $\tilde{f}_i(y)$ is a convex function as well. To evaluate $\tilde{f}_i(y)$, it requires each node $i$ to select the $i$-th element of the vector $t$. As $i$ is the identifier of node $i$, the epigraph form requires each node to know its identifier, which is also needed in \cite[Assumption 2]{moral2016Perronestimate}.
\end{enumerate}\renewcommand{\labelenumi}{\rm(\alph{enumi})}
\end{myrem}
\subsection{Distributed random-fixed projected algorithm}
To recursively solve \myref{equ:p1def}, every node $j$ maintains a local estimate $x_j^k\in\mathbb{R}^m$ and $t_j^k\in\mathbb{R}^n$ at each iteration $k$.  Each node $i$ updates $x_i,t_i$ by combining its in-neighbors' estimates and adjusting it to approach its local constraint set. 

Specifically, we first solve an unconstrained optimization problem which removes the constraints in \myref{equ:p1def} by using the standard DGD algorithm and obtain intermediate state vectors $p_j^k$ and $y_j^k$, which correspond to $t_j^k\in\mathbb{R}^m$ and $x_j^k\in\mathbb{R}^n$, respectively, i.e.,
\begin{align}
\label{equ:a1p1}
p_j^k&=\sum_{i=1}^na_{ji}^kt_i^k-\zeta^k1_n,\\
\label{equ:a1p2}
y_j^k&=\sum_{i=1}^na_{ji}^kx_i^k,
\end{align}
where $\zeta^k$ is the step-size satisfying the persistent exciting condition
\begin{equation}\label{equ:stepdef}
\zeta^k>0,\quad\sum_{k=0}^\infty\zeta^k=\infty,\quad\sum_{k=0}^\infty(\zeta^k)^2<\infty.
\end{equation}

Then, we adopt the Polyak's idea to address the constraints of \myref{equ:p1def} to drive the intermediate state vectors toward the feasible set. Define the following notations
\begin{align}
X_j^l&=\{x\in \mathbb{R}^m|g_j^l(x)\leq0\}, l\in\{1,...,\tau_j\},\notag\\
\label{equ:oldsetdef}
X\times T_j&=\{(x,t)|x\in X,f_j(x)-e_j^Tt\leq0\}.
\end{align}
We update $y_j^k$ toward a randomly selected set $X_j^{\omega_j^k}$ by using the Polyak's projection idea, i.e.,
\begin{equation}
\label{equ:a1p3}
z_j^{k}=y_j^k-\beta\frac{g_j^{\omega_j^k}(y_j^k)_+}{\|u_j^k\|^2}u_j^k,
\end{equation}
Where $\beta\in(0,2)$ is a constant parameter, and the vector $u_j^k\in\partial g_j^{\omega_j^k}(y_j^k)_+$ if $g_j^{\omega_j^k}(y_j^k)_+>0$ and $u_j^k=u_j$ for some $u_j\neq0$ if $g_j^{\omega_j^k}(y_j^k)_+=0$. In fact, $u_j^k$ is a decreasing direction of $g_j^{\omega_j^k}(y_j^k)_+$, i.e.,
$d(z_j^k,  X_j^{\omega_j^k})\leq d(y_j^k,  X_j^{\omega_j^k})$ for sufficiently small $\beta$ where $d(x,X)$ denotes the distance from $x$ to the closed convex set $X$. If $\omega_j^k$ is appropriately selected, it is expected in average that
$d(z_j^k, \cap_{l=1}^{\tau_j} X_j^l)\leq d(y_j^k, \cap_{l=1}^{\tau_j} X_j^l).$

While the auxiliary vector $t_j^k$ is not updated during the above process, we use the same idea to handle the newly introduced constraint $X_j^0\times T_j$ such that
\begin{align}
\label{equ:a1p4}
x_j^{k+1}&=\Pi_X(z_j^k-\beta\frac{(f_j(z_j^k)-e_j^Tp_j^k)_+}{1+\|v_j^k\|^2}v_j^k),\\
\label{equ:a1p5}
t_j^{k+1}&=p_j^k+\beta\frac{(f_j(z_j^k)-e_j^Tp_j^k)_+}{1+\|v_j^k\|^2}e_j,
\end{align}
where $v_j^k\in\partial f(z_j^k)$. Similarly, we have that
\begin{equation}
d((x_j^{k+1},t_j^{k+1}), X\times T_j)\leq d((z_j^k,t_j^k), X\times T_j).
\end{equation}
Once all the nodes reach an agreement, the state vector $(x_j^k,t_j^k)$ in each node asymptotically converges to a feasible point.
\begin{algorithm}[htbp!]
\caption{Distributed random-fixed projected algorithm (D-RFP)}
\begin{enumerate}\renewcommand{\labelenumi}{\noindent\arabic{enumi}:}
\item \textbf{Initialization:} For each node $j\in\mathcal{V}$, set $x_j=0,t_j=0$.
\item\textbf{Repeat}
\item\textbf{Set} $k=1$.
\item\textbf{Local information exchange:} Each node $j\in\mathcal{V}$ broadcasts $x_j$ and $t_j$ to its out-neighbors.
\item\textbf{Local variables update:} Every node $j\in\mathcal{V}$ receives the state vectors $x_i$ and $t_i$ from its in-neighbors $i\in\mathcal{N}_j^{in}$ and updates its local vectors as follows
\begin{itemize}
  \item $y_j=\sum_{i\in\mathcal{N}_j^{in}}a_{ji}^kx_i$, $p_j=\sum_{i\in\mathcal{N}_j^{in}}a_{ji}^kt_i-\zeta^k1_n$, where the stepsize $\zeta^k$ is given in \myref{equ:stepdef}.
  \item Draw a random variable $\omega_j$ from $\{1,...,\tau_j\}$, and obtain $z_j=y_j-\beta\frac{g_j^{\omega_j}(y_j)_+}{\|u_j\|^2}u_j$, where $u_j$ is defined in \myref{equ:a1p3}.
  \item Set $x_j\leftarrow\Pi_X(z_j-\beta\frac{(f_j(z_j)-e_j^Tp_j)_+}{1+\|v_j\|^2}v_j)$, where $v_j$ is defined in \myref{equ:a1p4}, and $t_j\leftarrow p_j+\beta\frac{(f_j(z_j)-e_j^Tp_j)_+}{1+\|v_j\|^2}e_j$.
\end{itemize}
\item\textbf{Set} $k=k+1$.
\item\textbf{Until} a predefined stopping rule is satisfied.
\end{enumerate}
\label{alg_disrandom}
\end{algorithm}

\begin{myrem}\label{rem:simple} In light of the constraints in \myref{equ:p1def}, we fully exploit their structures by adopting the Polyak projection idea \cite{nedic2011random}, which is different from the constrained version of DGD in \cite{lobel2011distributed} by directly using the projection operator. Clearly, the projection is easy to perform only if the projected set has a relatively simple structure, e.g., interval or half-space. From this perspective, our algorithm needs not to accurately calculate the projection and thus requires much less computational load per iteration. 
\end{myrem}
\begin{myrem}\label{rem:polyak1} Algorithm \ref{alg_disrandom}  is motivated by a centralized Polyak random algorithm \cite{nedic2011random}, which is extended to the distributed version in \cite{you2016scenario}. The main difference from \cite{you2016scenario} is that we do not use randomized projection on all the constraints. For instance, $X_j^0\times T_j$ is always considered per iteration. If we equally treat the constraints $g_j(x)\preceq 0$ and $f_j(x)-e_j^Tt\leq 0$, then once the selected constraint is from an element of $g_j(x)$, the vector $t$ is not updated as $t$ is independent of $g_j(x)$. This may slow down the convergence speed and lead to undesired transient behavior. Thus, Algorithm \ref{alg_disrandom}  adds a fixed projection to ensure that both $x$ and $t$ are updated at each iteration.
\end{myrem}
\begin{myrem}\label{rem:polyak2} Algorithm \ref{alg_disrandom} also has a strong relation with the alternating projection algorithm, which finds the intersection of several constraint sets by alternative projections, see e.g. \cite{escalante2011alternating}. The idea is that the state vector asymptotically gets closer to the intersection by repeatedly projecting to differently selected constraint sets. In light of this, the ``projection'' in Algorithm \ref{alg_disrandom}  can be performed for an arbitrary number of times at a single iteration as well. Another option is to project to the constraint set with the farthest distance from the intermediate vector.
\end{myrem}
Under some mild conditions,  we prove the convergence of Algorithm \ref{alg_disrandom} in next section.  

\section{Convergence Analysis}
\label{Convergence}
\subsection{Assumptions and Convergence}
To simplify notations in proving the convergence of Algorithm \ref{alg_disrandom}, we consider the following form of \myref{equ:p1def}
\begin{align}
\min_{\theta\in\Theta}\quad&c^T\theta,\notag\\
\text{s.t.}\quad &f_j(\theta)\leq0,\notag\\
\label{equ:p2def}
&g_j(\theta)\preceq0,\quad j=1,2,...,n,
\end{align}
where $\theta=(x,t) \in\bR^d$ in  \myref{equ:p1def} and $d=m+n$.  Moreover, $f_j:\mathbb{R}^d\rightarrow\mathbb{R}$ is a convex function and $g_j:\mathbb{R}^d\rightarrow\mathbb{R}^{\tau_j}$ is a vector of convex functions. Then, Algorithm \ref{alg_disrandom} for \myref{equ:p2def} is given as
\begin{subequations}
\label{equ:generic}
\begin{align}
\label{equ:a2p1}
p_j^k&=\sum_{i=1}^na_{ji}^k\theta_i^k-\zeta^kc,\\
\label{equ:a2p2}
q_j^k&=p_j^k-\beta\frac{g^{\omega_j^k}(p_j^k)_+}{\|u_j^k\|^2}u_j^k,\\
\label{equ:a2p3}
\theta_j^{k+1}&=\Pi_\Theta(q_j^k-\beta\frac{f_j(p_j^k)_+}{\|v_j^k\|^2}v_j^k),
\end{align}
\end{subequations}
where $u_j^k\in\partial g_j^{\omega_j^k}(p_j^k)_+$ if $g_j^{\omega_j^k}(p_j^k)_+>0$ and $u_j^k=u_j$ for some $u_j\neq0$ if $g_j^{\omega_j^k}(p_j^k)_+=0$, and the vector $v_j^k$ is defined similarly related to $f_j$. Clearly, it is sufficient to prove the convergence of  \myref{equ:generic}. To this end, we introduce some notations
\begin{align}
\Theta_j&=\{\theta\in\Theta|f_j(\theta)\leq0,g_j(\theta)\preceq0\},\notag\\
\Theta_0&=\Theta_1\cap\cdots\cap\Theta_n,\notag\\
\label{def:theta}
\Theta^*&=\{\theta\in\Theta_0|c^T\theta\leq c^T\theta',\forall\theta'\in\Theta_0\}
\end{align}
and make the following assumptions. 

\begin{myass}[Randomness and bounded subgradients]\label{assum:random} \quad
\begin{enumerate}\renewcommand{\labelenumi}{\rm(\alph{enumi})}
\item $\{\omega_j^k\}$ is an i.i.d. sequence, and is uniformly distributed over $\{1,...,\tau_j\}$ for any $j\in\mathcal{V}$. Moreover, it is independent over the index $j$.
\item The sub-gradient $u_j^k$ and $v_j^k$ given in \myref{equ:generic} are uniformly bounded over the set $\Theta$, i.e., there exists a scalar $D>0$ such that
    \begin{equation*}
    \max\{\|u_j^k\|,\|v_j^k\|\}\leq D,\forall j\in\mathcal{V},\forall k>0.
    \end{equation*}
\end{enumerate}
\end{myass}

\begin{myass}[Centralized solvability]\label{assum:solvable}The optimization problem in \myref{equ:p2def} is feasible and has a nonempty set of optimal solutions, i.e., $\Theta_0\neq\varnothing$ and $\Theta^*\neq\varnothing$.
\end{myass}

\begin{myass}[Uniformly Jointly Strong Connectivity]\label{assum:connected}The time-varying graphs $\{\mathcal{G}^k\}$ are uniformly jointly strong connected, i.e., there exists a positive integer B such that the joint graph $\mathcal{G}^t\cup\mathcal{G}^{t+1}\cup\cdots\cup\mathcal{G}^{t+B-1}$ is strongly connected for any $t\geq0$.
\end{myass}

\begin{myass}[Interaction Strength]\label{assum:strength}There exists a scalar $\gamma>0$ such that for any $i,j\in\mathcal{V}$ and $k\geq0$, if $a_{ij}^k>0$, then $a_{ij}^k\geq\gamma$.
\end{myass}
Assumption \ref{assum:connected} is common, see e.g. \cite{nedic2015distributed}, which
indicates that node $i$ can always directly or indirectly receive information from any other node $j$ during any period of $B$. Assumption \ref{assum:strength} implies that if node $i$ has access to node $j$ (node $i$ also has access to itself), then the strength would not be too weak. 

Now, we are ready to state our main convergence result. 
\begin{mythm}[Almost sure convergence]\label{thm:convergent} Under Assumptions \ref{assum:random}-\ref{assum:strength}, the sequence $\{\theta_j^k\}$ in \myref{equ:generic} almost surely converges to some common point in the set $\Theta^*$ of the optimal solutions to \myref{equ:p2def}.
\end{mythm}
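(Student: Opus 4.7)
The overall plan is to set up a Robbins--Siegmund style supermartingale argument applied to the squared distance of each iterate to an arbitrary optimum $\theta^*\in\Theta^*$, and to decompose the analysis into three interacting pieces: consensus, feasibility, and optimality. The critical conceptual observation is that in \myref{equ:p2def} every node shares the \emph{same} linear objective $c^T\theta$; consequently the weighted objective $\bar F$ from \myref{barf} coincides with $n$ times $c^T\theta$ regardless of which (time-varying) stochastic vector the row-stochastic products $A^kA^{k-1}\cdots A^s$ converge to. This is precisely what the epigraph reformulation buys us, and it lets the analysis mirror the balanced-graph template in spite of Assumption \ref{assum:strength} only giving row-stochasticity.

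First I would carry out the consensus analysis. Writing \myref{equ:a2p1}--\myref{equ:a2p3} compactly as
\begin{equation*}
\theta_j^{k+1}=\sum_{i=1}^n a_{ji}^k\theta_i^k+\epsilon_j^k,
\end{equation*}
the perturbation $\epsilon_j^k$ collects the three adjustment terms ($-\zeta^kc$, the random projection along $u_j^k$, and the fixed projection along $v_j^k$) and is bounded by $C(\zeta^k+\beta D)$ using Assumption \ref{assum:random}(b). Standard results for backward products of row-stochastic matrices under Assumptions \ref{assum:connected}--\ref{assum:strength} yield a geometric contraction $\|\Phi(k,s)-1_n\phi(s)^T\|\leq C\lambda^{k-s}$ for some $\lambda\in(0,1)$ and a stochastic $\phi(s)$. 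Combining this with $\sum_k(\zeta^k)^2<\infty$ and a squared-summation of the projection perturbations (which are $O(\beta^2)$ per step but get pre-multiplied by the violation quantities controlled in the feasibility step) shows that the disagreement $\|\theta_j^k-\theta_i^k\|$ is summable in an $L^2$ sense and vanishes almost surely.

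Next I would establish feasibility. The key inequality is a Polyak-type non-expansiveness: for any $\theta\in\Theta_j$,
\begin{equation*}
\|q_j^k-\theta\|^2\leq\|p_j^k-\theta\|^2-\beta(2-\beta)\frac{g_j^{\omega_j^k}(p_j^k)_+^2}{\|u_j^k\|^2},
\end{equation*}
and an analogous bound for the $f_j$-step applied to $\theta_j^{k+1}$. Taking conditional expectation in $\omega_j^k$, the uniformity in Assumption \ref{assum:random}(a) produces a term proportional to the average squared violation over $l\in\{1,\ldots,\tau_j\}$, which by convexity dominates $d(p_j^k,\Theta_j)^2$ up to a constant depending on $\tau_j$ and $D$. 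Plugging this into the convex combination of $\|\theta_i^k-\theta^*\|^2$ coming from the consensus step, and summing over $j$ with the appropriate weights, I would obtain an inequality of the form
\begin{equation*}
\mathbb{E}\bigl[V^{k+1}\mid\mathcal{F}^k\bigr]\leq V^k-\alpha_k\sum_j d(p_j^k,\Theta_j)^2+b_k,
\end{equation*}
where $V^k=\sum_j\|\theta_j^k-\theta^*\|^2$ plus a cross-term with $\zeta^k c^T\theta_j^k$, and $\{b_k\}$ is summable. Robbins--Siegmund then delivers both the almost sure convergence of $V^k$ and $\sum_k d(p_j^k,\Theta_j)^2<\infty$, which together with the consensus result implies $\theta_j^k$ almost surely approaches $\Theta_0$.

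Finally, for optimality I would absorb the driving term $-\zeta^kc$ into the supermartingale inequality so that the telescoped bound reads $\sum_k\zeta^k(c^T\bar\theta^k-c^T\theta^*)<\infty$; since $\sum_k\zeta^k=\infty$, this forces $\liminf_k c^T\bar\theta^k\leq c^T\theta^*$, and combined with feasibility of the limit points and convergence of $V^k$, every accumulation point lies in $\Theta^*$. The hard part is bookkeeping the interaction between the three sources of error in a single Lyapunov inequality: the stochastic projection error (which is not summable pointwise but only in expectation), the deterministic consensus error (which depends on a time-varying Perron-like vector $\phi(s)$ rather than a fixed $\pi$), and the gradient-like noise $-\zeta^kc$. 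Making these three cancellations line up in the conditional expectation, while exploiting that all nodes share the same linear cost to eliminate the usual unbalanced-graph bias, is where the bulk of the technical work lies.
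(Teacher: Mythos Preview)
Your toolbox is the same as the paper's (Polyak nonexpansiveness, Robbins--Siegmund, backward products of row-stochastic matrices), but the logical dependencies are inverted in a way that creates a real gap.

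You propose to establish consensus first, bounding the perturbation as $\|\epsilon_j^k\|\leq C(\zeta^k+\beta D)$. That bound is correct but does not tend to zero: the Polyak correction terms $\beta\,g_j^{\omega_j^k}(p_j^k)_+/\|u_j^k\|\cdot u_j^k/\|u_j^k\|$ and the analogous $f_j$-term stay of order one until you already know the iterates are approaching the feasible set. Hence Lemma-\ref{lemma:consensus}-type reasoning cannot be invoked at this stage. The paper resolves this by running the supermartingale argument \emph{first}: from the per-node inequality one forms the $\pi_j^{k+1}$-weighted sum, uses $(\pi^{k})^T=(\pi^{k+1})^TA^k$ from Lemma~\ref{thm:ergodic}(c) so that the consensus mixing telescopes exactly, and applies Robbins--Siegmund to $\sum_j\pi_j^k\|\theta_j^k-\theta^*\|^2$. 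This yields $\sum_k\sum_j\pi_j^{k+1}\|\lambda_j^k-\mu_j^k\|^2<\infty$, whence $d(\lambda_j^k,\Theta_0)\to 0$ (using the uniform lower bound $\pi_j^k\geq\eta$), and only then does $\epsilon_j^k\to 0$ follow, after which consensus (Lemma~\ref{lemma:consensus}) can be concluded. Your unweighted $V^k=\sum_j\|\theta_j^k-\theta^*\|^2$ will not telescope under a merely row-stochastic $A^k$; the time-varying left Perron vector is not an afterthought here but the mechanism that replaces double stochasticity.

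Two smaller points. First, the claim that the expected squared violation ``by convexity dominates $d(p_j^k,\Theta_j)^2$'' is not a convexity fact; it is a linear-regularity/error-bound property of the constraint system (this is what underlies the constant $R_2$ in Lemma~\ref{lemma:stodec}, deferred in the paper to \cite{you2016scenario}). Second, your final optimality step is the right idea, but note that the term entering the supermartingale is $\zeta^k c^T(\tilmu^k-\theta^*)$ with $\tilmu^k=\sum_j\pi_j^{k+1}\mu_j^k\in\Theta_0$, which is automatically nonnegative; this is what makes Robbins--Siegmund applicable and yields $\liminf_k c^T\tilmu^k=c^T\theta^*$ directly, without needing to argue about $c^T\bar\theta^k$ and feasibility of limit points separately.
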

\subsection{Proof of Theorem \ref{thm:convergent}}

The proof  is roughly divided into three parts. The first part establishes a sufficient condition to ensure asymptotic \textit{consensus}, under which the sequence $\{\theta_j^k\}$ converges to the same value for all $j\in\mathcal{V}$. The second part demonstrates the asymptotic \textit{feasibility} of the state vector $\theta_j^k$. Finally, the last part guarantees \textit{optimality} by showing that the distance of $\theta_j^k$ to any optimal point $\theta^*$ is ``stochastically'' decreasing. Combining these three parts, we show that $\{\theta_j^k\}$ converges to some common point in $\Theta^*$ almost surely.

Firstly, we establish a sufficient condition for asymptotic consensus, which relies on the following lemmas.
\begin{mylem}\label{thm:ergodic} (\cite{tsitsiklis1984problems}) Let Assumptions \ref{assum:connected}-\ref{assum:strength} hold for $\{\cG^k\}$ and $\{A^k\}$. For $s\geq k$, let $A^{s:k}=A^{s-1}\cdots A^k$ where $A^{k:k}=I$ and $a_{ij}^{s:k}$ be entries of $A^{s:k}$. Then for any $k\geq0$, there exists a normalized vector $\pi^k$ ($1_n^T\pi^k=1$) such that
\begin{enumerate}\renewcommand{\labelenumi}{\rm(\alph{enumi})}
\item We can find $L>0$, $0<\xi<1$ such that $|a^{s:k}_{ij}-\pi^k_j|\leq L\xi^{s-k}$ for any $i,j\in\mathcal{V}$ and $s\geq k$.
\item There exists a constant $\eta\geq\gamma^{(n-1)B}$ such that $\pi^k_i\geq\eta$ for any $i\in\mathcal{V}$ and $k\geq0$.
\item ${(\pi^k)}^T={(\pi^{k+1})}^TA^k$.
\end{enumerate}
\end{mylem}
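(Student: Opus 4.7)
The plan is to deduce weak ergodicity of the forward products $A^{s:k}=A^{s-1}\cdots A^k$ via Hajnal's coefficient of ergodicity, and then read off the probability vectors $\pi^k$ as the common row-limits. For any row-stochastic $A\in\mathbb{R}^{n\times n}$, set
\[
\tau(A)=\tfrac{1}{2}\max_{i,i'}\sum_{j=1}^n|a_{ij}-a_{i'j}|\in[0,1].
\]
Two classical facts will drive the whole argument: submultiplicativity $\tau(AB)\leq\tau(A)\tau(B)$ for row-stochastic $A,B$, and the Dobrushin bound: if every entry of $A$ is at least $\delta>0$, then $\sum_j\min(a_{ij},a_{i'j})\geq n\delta$ and hence $\tau(A)\leq 1-n\delta$.

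The main obstacle, and the only step that really uses Assumptions \ref{assum:connected}--\ref{assum:strength}, is to lower-bound the entries of an $(n-1)B$-step product uniformly. Under Assumption \ref{assum:connected} the joint graph $\mathcal{G}^t\cup\cdots\cup\mathcal{G}^{t+B-1}$ is strongly connected, so every ordered pair $(i,j)$ is connected by a directed path of length at most $n-1$ within $\mathcal{G}^k\cup\cdots\cup\mathcal{G}^{k+(n-1)B-1}$. Each used edge carries weight at least $\gamma$ by Assumption \ref{assum:strength}, and every node has a self-loop of weight at least $\gamma$, so by padding the path with self-loops at the unused time slots one obtains
\[
(A^{k+(n-1)B:k})_{ij}\geq\gamma^{(n-1)B}=:\delta_0
\]
for all $i,j$ and $k\geq 0$. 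The Dobrushin bound then gives $\tau(A^{k+(n-1)B:k})\leq 1-n\delta_0$, and submultiplicativity over disjoint blocks yields
\[
\tau(A^{s:k})\leq(1-n\delta_0)^{\lfloor(s-k)/((n-1)B)\rfloor},
\]
i.e.\ geometric decay with some rate $\xi=(1-n\delta_0)^{1/((n-1)B)}\in(0,1)$.

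To upgrade vanishing $\tau$ into genuine convergence, observe that for $s'\geq s\geq k$ one has $A^{s':k}=A^{s':s}A^{s:k}$, so each row of $A^{s':k}$ is a convex combination of rows of $A^{s:k}$, giving $\|A^{s':k}_{i,\cdot}-A^{s:k}_{i,\cdot}\|_1\leq 2\tau(A^{s:k})\to 0$. Hence $\{A^{s:k}\}_s$ is Cauchy in each row and converges to some $\mathbf{1}_n(\pi^k)^T$ with $\pi^k$ a probability vector, which together with the geometric rate above establishes part (a). For part (b), once $(A^{s:k})_{ij}\geq\delta_0$ holds at $s=k+(n-1)B$, the recursion $A^{s+1:k}=A^s A^{s:k}$ and row-stochasticity of $A^s$ give $(A^{s+1:k})_{ij}\geq\delta_0\sum_l a_{il}^s=\delta_0$, so the bound persists for all larger $s$ and passes to the limit: $\pi^k_j\geq\gamma^{(n-1)B}=:\eta$. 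For part (c), use the associativity $A^{s:k}=A^{s:k+1}A^k$: each row of $A^{s:k+1}$ converges to $(\pi^{k+1})^T$ by part (a), so each row of $A^{s:k}$ converges to $(\pi^{k+1})^T A^k$, which must equal $(\pi^k)^T$ by uniqueness of the limit. Once the $(n-1)B$-step positivity is secured, the remainder is a routine application of the two Hajnal facts.
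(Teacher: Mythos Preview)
The paper does not prove this lemma; it is quoted from \cite{tsitsiklis1984problems} and used as a black box. Your proposal therefore supplies what the paper omits, and the coefficient-of-ergodicity route you take (submultiplicativity of $\tau$, Dobrushin's positivity bound, then reading off $\pi^k$ as the common row-limit) is exactly the classical one. Parts (a), (b), and (c) of your outline are correct once the uniform entrywise lower bound $(A^{k+(n-1)B:k})_{ij}\geq\gamma^{(n-1)B}$ is secured.

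That positivity step, however, is not quite established by what you wrote. You take a directed path of length at most $n-1$ in the \emph{union} graph $\mathcal{G}^k\cup\cdots\cup\mathcal{G}^{k+(n-1)B-1}$ and propose to ``pad with self-loops at the unused time slots''. This only works if the edge-times along the chosen path are temporally increasing, which the union-graph path does not guarantee: two consecutive edges of your path may occur in the wrong order, and self-loop padding cannot reorder them. The standard fix is to argue block by block: with $T^{(0)}=\{j\}$ and $T^{(m)}=\{l:(A^{k+mB:k})_{lj}>0\}$, self-loops give $T^{(m)}\subseteq T^{(m+1)}$, and strong connectivity of the joint graph over $[k+mB,k+(m+1)B-1]$ forces $T^{(m+1)}\supsetneq T^{(m)}$ whenever $T^{(m)}\neq\mathcal{V}$ (some node outside $T^{(m)}$ sends to a node already in $T^{(m)}$ at some time in the block, and by self-loops that recipient is still reachable from $j$ at that time). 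Thus $T^{(n-1)}=\mathcal{V}$, and tracking a single term in the product expansion (one edge or self-loop per time step, each of weight at least $\gamma$) gives $(A^{k+(n-1)B:k})_{ij}\geq\gamma^{(n-1)B}$. With this correction the rest of your argument goes through verbatim.
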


\begin{mylem}[Exponential stability]\label{lemma:decay}Consider a vector sequence $\{\delta^k\}\in\mathbb{R}^n$ generated by
\begin{align}
\delta^{k+1}=M^k\delta^k+\varepsilon^k,
\end{align}
where $\{M^k\}\in\mathbb{R}^{n\times n}$ is a sequence of matrices. If there exist constant scalars $L>0$ and $0\leq\xi<1$ such that $\|\prod_{l=k}^{s-1}M^l\|\leq L\xi^{s-k}$, and $\lim_{k\rightarrow\infty}\|\varepsilon^k\|=0$, then $\lim_{k\rightarrow\infty}\|\delta^k\|=0$.
\end{mylem}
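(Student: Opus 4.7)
The plan is to derive a closed-form expression for $\delta^s$ by unrolling the recursion, and then bound each piece using the exponential decay of the matrix products and the vanishing of $\varepsilon^k$. Iterating $\delta^{k+1}=M^k\delta^k+\varepsilon^k$ starting from $k=0$ yields
\begin{equation*}
\delta^s=\Bigl(\prod_{l=0}^{s-1}M^l\Bigr)\delta^0+\sum_{j=0}^{s-1}\Bigl(\prod_{l=j+1}^{s-1}M^l\Bigr)\varepsilon^j,
\end{equation*}
where the product is interpreted in the left-multiplicative order $M^{s-1}M^{s-2}\cdots M^{k}$ (the empty product being $I$), consistent with the notation $A^{s:k}$ used in Lemma \ref{thm:ergodic}. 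Taking norms and applying the hypothesis $\|\prod_{l=k}^{s-1}M^l\|\leq L\xi^{s-k}$ gives
\begin{equation*}
\|\delta^s\|\leq L\xi^{s}\|\delta^0\|+\sum_{j=0}^{s-1}L\xi^{s-j-1}\|\varepsilon^j\|.
\end{equation*}

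The homogeneous term $L\xi^{s}\|\delta^0\|$ evidently vanishes as $s\to\infty$ since $0\leq\xi<1$. The real work is showing that the convolution-type sum also vanishes. Fix an arbitrary $\epsilon>0$. Because $\|\varepsilon^k\|\to0$, there exists $k_0$ such that $\|\varepsilon^j\|\leq\epsilon(1-\xi)/(2L)$ for every $j\geq k_0$, and in particular the sequence $\{\|\varepsilon^j\|\}$ is bounded by some constant $E$. I would then split the sum at $k_0$: the tail satisfies
\begin{equation*}
\sum_{j=k_0}^{s-1}L\xi^{s-j-1}\|\varepsilon^j\|\leq\frac{\epsilon(1-\xi)}{2}\sum_{j=k_0}^{s-1}\xi^{s-j-1}\leq\frac{\epsilon}{2},
\end{equation*}
uniformly in $s$, by summing the geometric series. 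The head, $\sum_{j=0}^{k_0-1}L\xi^{s-j-1}\|\varepsilon^j\|\leq LE\,k_0\,\xi^{s-k_0}$, depends on $s$ only through the factor $\xi^{s-k_0}$, so choosing $s$ large enough makes it smaller than $\epsilon/2$. Combining both pieces with the homogeneous term gives $\|\delta^s\|\leq\epsilon$ for all sufficiently large $s$, which proves $\|\delta^k\|\to0$.

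I do not expect a serious obstacle here: the lemma is a standard input-to-state stability statement for a linear time-varying recursion, and the argument is essentially a Cesaro/convolution estimate. The only delicate point is making the split at $k_0$ cleanly so that the head and tail are controlled by two different mechanisms (geometric decay in $s$ versus smallness of $\varepsilon^j$), and choosing the order of quantifiers correctly, fixing $k_0$ before letting $s\to\infty$. No further assumption on $M^k$ beyond the given uniform bound on its products is needed, and no assumption on summability of $\|\varepsilon^k\|$ is used; only $\varepsilon^k\to0$ together with $\xi<1$ drives the conclusion.
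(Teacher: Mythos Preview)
Your proposal is correct and follows essentially the same approach as the paper: unroll the recursion into a variation-of-constants formula, bound the homogeneous term by $L\xi^s\|\delta^0\|$, and split the convolution sum at a cutoff index so that the early part decays geometrically in $s$ while the late part is small because $\|\varepsilon^j\|\to0$. The only cosmetic difference is that you phrase the splitting as an explicit $\epsilon$--$k_0$ argument whereas the paper takes the iterated limit $k\to\infty$ then $k_1\to\infty$; the content is identical.
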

\begin{proof}
We denote $M^{s:k}\triangleq \prod_{l=k}^{s-1}M^l$ for any $s\geq k\geq0$, then $\|M^{s:k}\|\leq L\xi^{s-k}$. It follows that
$$
\delta^k=M^{k:0}\delta^0+\sum_{l=1}^{k}M^{k:l}\varepsilon^{l-1}.
$$

By the triangle inequality, we have that
\begin{align}
\label{lemma:decayp1}
\|\delta^k\|&\leq\|M^{k:0}\|\|\delta^0\|+\sum_{l=1}^{k}\|M^{k:l}\|\|\varepsilon^{l-1}\|\notag\\
&\leq L\xi^k\|\delta^0\|+\sum_{l=1}^{k}L\xi^{k-l}\|\varepsilon^{l-1}\|.
\end{align}
It is clear that  $L\xi^k\|\delta^0\|\rightarrow0$ as $k\rightarrow\infty$, and the second term is divided into two parts as
\begin{align}
&\sum_{l=1}^{k}L\xi^{k-l}\|\varepsilon^{l-1}\|\notag\\
&=\sum_{l=1}^{k_1}L\xi^{k-l}\|\varepsilon^{l-1}\|+\sum_{l=k_1+1}^{k}L\xi^{k-l}\|\varepsilon^{l-1}\|\notag\\
&=L\xi^{k-k_1}\sum_{l=1}^{k_1}\xi^{k_1-l}\|\varepsilon^{l-1}\|+\sum_{l=k_1+1}^{k}L\xi^{k-l}\|\varepsilon^{l-1}\|\notag\\
\label{lemma:decayp2}
&\leq L\xi^{k-k_1}\frac{\xi}{1-\xi}\sup_{l\ge0}\|\varepsilon^l\|+L\frac{\xi}{1-\xi}\sup_{l\geq k_1}\|\varepsilon^{l}\|.
\end{align}
Since $\lim_{k\rightarrow\infty}\|\varepsilon^k\|=0$, there exists a positive scalar $\hat{\varepsilon}$ such that $\|\varepsilon^k\|\leq \hat{\varepsilon}$, e.g., $\sup_{l\ge0}\|\varepsilon^l\|\le \hat{\varepsilon}$. Then, letting $k\rightarrow\infty$, the first term in (\ref{lemma:decayp2}) tends to zero, after which we let $k_1\rightarrow\infty$, the second term in (\ref{lemma:decayp2}) goes to zero. Thus,
$
\lim_{k\rightarrow\infty}\sum_{l=1}^{k}L\xi^{k-l}\|\varepsilon^{l-1}\|=0.
$
By (\ref{lemma:decayp1}), we obtain that $\lim_{k\rightarrow\infty}\|\delta^k\|=0$.
\end{proof}
\begin{mylem}[Asymptotic consensus]\label{lemma:consensus}Consider the following sequence
\begin{align}
\label{equ:l8def}
\theta_j^{k+1}=\sum_{i=1}^na_{ji}^k\theta_i^k+\epsilon_j^k,\quad\forall j\in\mathcal{V},
\end{align}
where sequence $\{a_{ji}^k\}_{n\times n}$ satisfy Assumptions \ref{assum:connected}-\ref{assum:strength}. Let $\pi^k$ be given in Lemma \ref{thm:ergodic} and $\bartheta^k=\sum_{i=1}\pi_i^k\theta_i^k$. If $\lim_{k\rightarrow\infty}\|\epsilon_j^k\|=0$ for any $j\in\mathcal{V}$, it follows that
\begin{align}
\label{equ:l8res}
\lim_{k\rightarrow\infty}\|\theta_j^k-\bartheta^k\|=0,\quad\forall j\in\mathcal{V}.
\end{align}
\end{mylem}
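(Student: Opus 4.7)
The plan is to combine the ergodic estimates of Lemma \ref{thm:ergodic} for products of the weighting matrices with the exponential-stability argument of Lemma \ref{lemma:decay}. First, I would derive the one-step evolution of $\bartheta^k$ itself. Using Lemma \ref{thm:ergodic}(c), namely $(\pi^k)^T = (\pi^{k+1})^T A^k$, together with the row-stochasticity of $A^k$, a direct computation yields
\begin{equation*}
\bartheta^{k+1} = \bartheta^k + \sum_{j=1}^n \pi_j^{k+1}\epsilon_j^k,
\end{equation*}
so the weighted ``center of mass'' $\bartheta^k$ drifts only by a vanishing increment.

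Next, I would unroll the recursion \myref{equ:l8def} from time $0$ to time $s$, obtaining
\begin{equation*}
\theta_j^s = \sum_{i=1}^n a_{ji}^{s:0}\theta_i^0 + \sum_{l=1}^s \sum_{i=1}^n a_{ji}^{s:l}\epsilon_i^{l-1},
\end{equation*}
and, by iterating the identity $(\pi^s)^T A^{s:k} = (\pi^k)^T$ (which is immediate from Lemma \ref{thm:ergodic}(c)),
\begin{equation*}
\bartheta^s = \sum_{i=1}^n \pi_i^0\theta_i^0 + \sum_{l=1}^s \sum_{i=1}^n \pi_i^l\epsilon_i^{l-1}.
\end{equation*}
Subtracting and invoking the geometric estimate $|a_{ji}^{s:l} - \pi_i^l| \leq L\xi^{s-l}$ of Lemma \ref{thm:ergodic}(a), I obtain
\begin{equation*}
\|\theta_j^s - \bartheta^s\| \leq nL\xi^s \max_i\|\theta_i^0\| + \sum_{l=1}^s nL\xi^{s-l}\max_i\|\epsilon_i^{l-1}\|.
\end{equation*}

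The first term decays exponentially in $s$. For the second term, I would reuse the splitting argument from the proof of Lemma \ref{lemma:decay}: for an arbitrary index $k_1$, split the sum at $k_1$; the head is bounded by $nL\xi^{s-k_1}\frac{\xi}{1-\xi}\sup_{l\geq 0}\max_i\|\epsilon_i^l\|$, which is finite since $\|\epsilon_i^l\|\to 0$ implies boundedness, while the tail is bounded by $\frac{nL\xi}{1-\xi}\sup_{l\geq k_1}\max_i\|\epsilon_i^l\|$. Letting $s\to\infty$ first and then $k_1\to\infty$ drives both contributions to zero, which establishes \myref{equ:l8res}. The main subtlety in this plan is that the centering vector $\bartheta^k$ uses the time-varying weights $\pi^k$ rather than a fixed Perron vector, so matching the unrolled expressions for $\theta_j^s$ and $\bartheta^s$ term-by-term hinges on the iterated identity $(\pi^s)^T A^{s:k} = (\pi^k)^T$; once this is in hand, everything else reduces to the exponential-stability computation already carried out in Lemma \ref{lemma:decay}.
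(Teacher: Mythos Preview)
Your proposal is correct and follows essentially the same route as the paper. The paper stacks the recursion, forms the deviation $\tilde\theta^k=\theta^k-(1_n\otimes I_d)\bartheta^k$, shows it satisfies $\tilde\theta^{k+1}=M^k\tilde\theta^k+\tilde\epsilon^k$ with $M^k=(A^k-1_n(\pi^k)^T)\otimes I_d$, verifies $\prod_{l=k}^{s-1}M^l=(A^{s:k}-1_n(\pi^k)^T)\otimes I_d$ via the same iterated identity $(\pi^s)^TA^{s:k}=(\pi^k)^T$ you use, and then invokes Lemma~\ref{lemma:decay}; your direct unrolling of $\theta_j^s$ and $\bartheta^s$ followed by the splitting argument is exactly what one obtains by expanding that application of Lemma~\ref{lemma:decay}, so the difference is purely presentational.
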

\begin{proof} Denote $\theta^k={[{(\theta_1^k)}^T,...,{(\theta_n^k)}^T]}^T$ and $\epsilon^k={[{(\epsilon_1^k)}^T,...,{(\epsilon_n^k)}^T]}^T$, then \myref{equ:l8def} can be uniformly written as
\begin{align}
\label{equ:l8p1}
\theta^{k+1}=(A^k\otimes I_d)\theta^k+\epsilon^k.
\end{align}

By pre-multiplying $(1_n\pi^{k+1})\otimes I_d$ on both sides of \myref{equ:l8p1}, it follows from Lemma \ref{thm:ergodic} that
\begin{equation}
\label{equ:l8p2}
(1_n{(\pi^{k+1})}^T\otimes I_d)\theta^{k+1}=(1_n{(\pi^k)}^T\otimes I_d)\theta^k+{\bar{\epsilon}}^k,
\end{equation}
where ${\bar{\epsilon}}^k=(1_n{(\pi^{k+1})}^T\otimes I_d)\epsilon^k$, which obviously converges to zero. Since $\bartheta^k=({(\pi^k)}^T\otimes I_d)\theta^k$, then \myref{equ:l8p2} is rewritten as
\begin{equation}
\label{equ:l8p3}
(1_n\otimes I_d)\bartheta^{k+1}=(1_n\otimes I_d)\bartheta^k+{\bar{\epsilon}}^k.
\end{equation}
Let $\tiltheta^k=\theta^k-(1_n\otimes I_d)\bar{\theta}^k$. Then, our aim is to prove that $\tiltheta^k\rightarrow0$ as $k\rightarrow\infty$. In fact, we subtract \myref{equ:l8p1} by \myref{equ:l8p3} and obtain
\begin{align}
\tiltheta^{k+1}=((A^k-1_n{(\pi^k)}^T)\otimes I_d)\theta^k+\epsilon^k-{\bar{\epsilon}}^k.
\end{align}
We note that $\theta^k=\tiltheta^k+(1_n\otimes I_d)\bartheta^k$, and
\begin{align}
&((A^k-1_n{(\pi^k)}^T)\otimes I_d)\cdot(1_n\otimes I_d)\bartheta^k\notag\\
&=((A^k1_n-1_n{(\pi^k)}^T1_n)\otimes I_d)\bartheta^k\notag\\
&=((1_n-1_n)\otimes I_d)\bartheta^k=0.\notag
\end{align}
Then \myref{equ:l8p3} is equivalent to
\begin{align}
\tiltheta^{k+1}=M^k\tiltheta^k+\tilde{\epsilon}^k,\notag
\end{align}
where $M^k=((A^k-1_n{(\pi^k)}^T)\otimes I_d)$ and $\tilde{\epsilon}^k=\epsilon^k-{\bar{\epsilon}}^k\rightarrow0$. By using mathematical induction, it is easy to verify that
$$
\prod_{l=k}^{s-1}M^l=(A^{s:k}-1_n(\pi^k)^T)\otimes I_d.
$$
It follows from Lemma \ref{thm:ergodic}(b) that there exists $L>0$ and $0\leq\xi<1$ such that for any $i,j\in\mathcal{V}$,
$$
|a^{s:k}_{ij}-1_n(\pi^k)^T_{ij}|\leq L\xi^{s-k},
$$
and further we have
$$
\|\prod_{l=k}^{s-1}M^l\|_{\infty}\leq ndL\xi^{s-k}.
$$
In light of Lemma \ref{lemma:decay}, we obtain $\lim_{k\rightarrow\infty}\|\tiltheta^k\|_{\infty}=0$. That is, for any $j\in\mathcal{V}$, $\lim_{k\rightarrow\infty}\|\theta_j^k-\bartheta^k\|=0$.
\end{proof}
The rest of two parts both crucially depend on a property of iterative ``projections'', which implies that the distance from a point to the intersection of convex sets decreases no matter how many times the ``projections'' are performed.
\begin{mylem}[Iterative projection]\label{lemma:toset}Let $\{h_k\}:~\mathbb{R}^m\rightarrow\mathbb{R}$ be a sequence of convex functions and $\{\Omega_k\}\subseteq\mathbb{R}^m$ be a sequence of convex closed sets. Define $\{y_k\}\subseteq \mathbb{R}^m$ by
\begin{equation*}
y_{k+1}=\Pi_{\Omega_k}(y_k-\beta\frac{h_{k}(y_k)_+}{\| d_k\|^2}d_k),
\end{equation*}
where $0<\beta<2,d_k\in\partial h_{k}(y_k)$ if $h_{k}(y_k)>0$ and $d_k= d$ for any $d\neq 0$, otherwise. For any $z\in{(\Omega_0\cap\cdots\cap\Omega_{k-1})}\bigcap\{y|h_j(y)\leq0,j=0,\ldots,k-1\}$, it holds
\begin{equation*}
\| y_k-z\|^2\leq\| y_0-z\|^2-\beta(2-\beta)\frac{\| h_{0}(y_0)_+\|^2}{\| d_0\|^2}.
\end{equation*}
\end{mylem}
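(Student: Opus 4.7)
The plan is to establish a per-step descent inequality for the quantity $\|y_{k+1}-z\|^2$ and then iterate it. The key structural fact is that the update consists of two pieces: a subgradient-type step toward the zero-sublevel set of $h_k$, followed by a projection onto $\Omega_k$. Since $z\in\Omega_k$, the projection is non-expansive with respect to $z$, so it suffices to control the distance after the subgradient step and then invoke $\|\Pi_{\Omega_k}(w)-z\|\le\|w-z\|$.

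First I would prove a single-step lemma: for any $j$ and any $z\in\Omega_j$ with $h_j(z)\le 0$,
\begin{equation*}
\|y_{j+1}-z\|^2\le\|y_j-z\|^2-\beta(2-\beta)\frac{h_j(y_j)_+^2}{\|d_j\|^2}.
\end{equation*}
The argument splits into two trivial cases. When $h_j(y_j)\le 0$, the update reduces to $y_{j+1}=\Pi_{\Omega_j}(y_j)$ and the right-hand correction vanishes, so non-expansivity alone gives the bound. When $h_j(y_j)>0$, let $\tilde y_j:=y_j-\beta h_j(y_j)\|d_j\|^{-2}d_j$, so that $y_{j+1}=\Pi_{\Omega_j}(\tilde y_j)$. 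Expanding $\|\tilde y_j-z\|^2$ produces a cross term $-2\beta h_j(y_j)\|d_j\|^{-2}d_j^T(y_j-z)$. By convexity of $h_j$ and $d_j\in\partial h_j(y_j)$, we have $h_j(z)\ge h_j(y_j)+d_j^T(z-y_j)$; combined with $h_j(z)\le 0$, this yields $d_j^T(y_j-z)\ge h_j(y_j)=h_j(y_j)_+$. Substituting this lower bound into the cross term and combining with the quadratic $\beta^2 h_j(y_j)_+^2/\|d_j\|^2$ produces exactly the factor $-\beta(2-\beta)h_j(y_j)_+^2/\|d_j\|^2$. Applying non-expansivity of $\Pi_{\Omega_j}$ then delivers the per-step inequality.

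Second, I would iterate this inequality. The hypothesis on $z$ guarantees $z\in\Omega_j$ and $h_j(z)\le 0$ for every $j=0,\dots,k-1$, so the per-step estimate applies at each iteration. Telescoping from $j=0$ to $j=k-1$ gives
\begin{equation*}
\|y_k-z\|^2\le\|y_0-z\|^2-\beta(2-\beta)\sum_{j=0}^{k-1}\frac{h_j(y_j)_+^2}{\|d_j\|^2}.
\end{equation*}
Since every term in the sum is non-negative and $0<\beta<2$, discarding all but the $j=0$ term yields exactly the stated inequality.

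I don't expect any real obstacle: the only subtle point is handling the degenerate case $h_j(y_j)\le 0$ (where $d_j$ is a fixed nonzero placeholder and not a subgradient), but in that case the coefficient $h_j(y_j)_+$ is zero and both sides of the per-step inequality coincide. The rest is a clean combination of the subgradient inequality with the non-expansiveness of convex projections, which is exactly the structure of the classical Polyak projection argument that this lemma extends.
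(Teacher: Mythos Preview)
Your proof is correct and follows essentially the same route as the paper: the paper invokes \cite[Lemma~1]{nedic2011random} for the per-step inequality you spell out, then uses $0<\beta<2$ to observe $\|y_{j+1}-z\|^2\le\|y_j-z\|^2$ and chains down to $\|y_k-z\|^2\le\|y_1-z\|^2$ before applying the $j=0$ step, whereas you telescope and discard all but the first term. The only difference is that you prove the cited one-step bound from scratch rather than quoting it.
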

\begin{proof} By \cite[Lemma 1]{nedic2011random} and the definition of $\{y_k\}$, it holds for $j\leq k-1$ that $\| y_{j+1}-z\|^2\leq\| y_{j}-z\|^2-\beta(2-\beta)\frac{\| h_{j}(y_j)_+\|^2}{\| d_{j}\|^2}$. Together with the fact that $0<\beta<2$, we have that $\| y_{j+1}-z\|^2\leq\| y_{j}-z\|^2$. Then, $\| y_{k}-z\|^2\leq\| y_{1}-z\|^2\leq\| y_0-z\|^2-\beta(2-\beta)\frac{\| h_{0}(y_0)_+\|^2}{\| d_0\|^2}$.
\end{proof}
The second result essentially ensures the local feasibility.
\begin{mylem}[Feasibility]\label{lemma:feasi} The sequence $\{\theta_j^k\}$ is generated by algorithm given in \myref{equ:generic}. We define $\lambda_j^k$ and $\mu_j^k$ as follows
\begin{equation}
\label{equ:l4def}
\lambda_j^k=\sum_{i=1}^n a_{ji}^k\theta_i^k,~\text{and}~\mu_j^k=\Pi_{\Theta_0}(\lambda_j^k),
\end{equation}
where $\Theta_0$ is defined in \myref{def:theta}. For any $j\in\mathcal{V}$, if $\lim_{k\rightarrow\infty}\| \lambda_j^k-\mu_j^k\|=0$, then $\lim_{k\rightarrow\infty}\| \mu_j^k-\theta_j^{k+1}\|=0$.
\end{mylem}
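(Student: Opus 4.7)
The plan is to bound $\|\mu_j^k-\theta_j^{k+1}\|$ by a two-step sandwich: first control $\|\theta_j^{k+1}-\mu_j^k\|$ by $\|p_j^k-\mu_j^k\|$ using the non-expansiveness of the two Polyak-style ``projection'' steps \myref{equ:a2p2}--\myref{equ:a2p3}, and then control $\|p_j^k-\mu_j^k\|$ by $\|\lambda_j^k-\mu_j^k\|$ using the vanishing step size. The key enabling observation is that $\mu_j^k=\Pi_{\Theta_0}(\lambda_j^k)\in\Theta_0\subseteq\Theta$ simultaneously satisfies \emph{all} local constraints of node $j$: $\mu_j^k\in\Theta$, $g_j^l(\mu_j^k)\le 0$ for every $l\in\{1,\dots,\tau_j\}$, and $f_j(\mu_j^k)\le 0$. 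Consequently, no matter which random index $\omega_j^k$ is drawn, $\mu_j^k$ lies in the common feasible set of the two constraints being ``projected onto'' in iteration $k$.

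From \myref{equ:a2p1} we have $p_j^k=\lambda_j^k-\zeta^k c$, so the triangle inequality gives
\[
\|p_j^k-\mu_j^k\|\le\|\lambda_j^k-\mu_j^k\|+\zeta^k\|c\|\to 0,
\]
using the hypothesis and $\zeta^k\to 0$, which follows from the step-size condition \myref{equ:stepdef}. For the first half of the sandwich, I would apply Lemma \ref{lemma:toset} with $y_0=p_j^k$, $y_1=q_j^k$, $y_2=\theta_j^{k+1}$, $\Omega_0=\bR^d$, $\Omega_1=\Theta$, $h_0=g_j^{\omega_j^k}$, $h_1=f_j$, and $z=\mu_j^k$. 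Since $\mu_j^k$ meets all relevant constraints, the lemma yields
\[
\|\theta_j^{k+1}-\mu_j^k\|^2\le\|p_j^k-\mu_j^k\|^2-\beta(2-\beta)\frac{\|g_j^{\omega_j^k}(p_j^k)_+\|^2}{\|u_j^k\|^2}\le\|p_j^k-\mu_j^k\|^2,
\]
where the second inequality uses $\beta\in(0,2)$. Combining the two bounds, $\|\mu_j^k-\theta_j^{k+1}\|\to 0$, which is the claim.

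The principal hurdle is verifying that the generic update \myref{equ:a2p3} genuinely fits the Polyak-style template of Lemma \ref{lemma:toset}; the function value there appears as $f_j(p_j^k)_+$ while the iterate being moved is $q_j^k$. Inspecting the non-generic algorithm \myref{equ:a1p4}--\myref{equ:a1p5}, the intended quantity is the violation of the epigraph-form constraint $\tilde f_j(y)=f_j(x)-e_j^T t$ at the partially-updated state $(z_j^k,p_j^k)$, which coincides with $q_j^k$ under the correspondence $\theta=(x,t)$; so the template does match after identifying $f_j$ in \myref{equ:a2p3} with $\tilde f_j$ evaluated at $q_j^k$. A minor technicality is ensuring the denominators $\|u_j^k\|$ and $\|v_j^k\|$ remain strictly positive, which is guaranteed by the convention (stated right after \myref{equ:generic}) of fixing them to a nonzero direction whenever the positive part vanishes.
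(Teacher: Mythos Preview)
Your proposal is correct and follows essentially the same approach as the paper: apply Lemma~\ref{lemma:toset} with $y_0=p_j^k$, $h_0=g_j^{\omega_j^k}$, $h_1=f_j$, $\Omega_0=\bR^d$, $\Omega_1=\Theta$, $z=\mu_j^k$ to get $\|\theta_j^{k+1}-\mu_j^k\|\le\|p_j^k-\mu_j^k\|$, and then bound $\|p_j^k-\mu_j^k\|\le\zeta^k\|c\|+\|\lambda_j^k-\mu_j^k\|$ via the triangle inequality. Your additional remark about the mismatch between $f_j(p_j^k)_+$ in \myref{equ:a2p3} and the evaluation at $q_j^k$ suggested by \myref{equ:a1p4}--\myref{equ:a1p5} is a valid observation the paper does not comment on; either reading fits the template of Lemma~\ref{lemma:toset} and the argument goes through unchanged.
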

\begin{proof}By Lemma \ref{lemma:toset}, we set $y_0=p_j^k$, where $p_j^k$ is given in \myref{equ:a2p1}, $h_0(y)=g_j^{\omega_j^k}(y)$ and $h_1(y)=f_j(y)$, $\Omega_0=\mathbb{R}^m$ and $\Omega_1=\Theta$. Then it follows from the algorithm given in \myref{equ:generic} that $y_2=\theta_j^{k+1}$. Since $\mu_j^k\in\Theta_0\subseteq(\Omega_0\cap\Omega_1)$, both $y_0(\mu_j^k)\leq0$ and $y_1(\mu_j^k)\leq0$ are satisfied. By Lemma \ref{lemma:toset}, it holds that
\begin{equation*}
\|\theta_j^{k+1}-\mu_j^k\|^2\leq\| p_j^k-\mu_j^k\|^2-\beta(2-\beta)\frac{g_j^{\omega_j^k}(p_j^k)_+^2}{\| d_j^k\|^2}.
\end{equation*}
Notice that $\| p_j^k-\mu_j^k\|\leq\| p_j^k-\lambda_j^k\|+\|\lambda_j^k-\mu_j^k\|=\zeta^k\|c\|+\|\lambda_j^k-\mu_j^k\|$,  we have \begin{equation}\label{proof:toset}\|\theta_j^{k+1}-\mu_j^k\|\leq\zeta^k\|c\|+\|\lambda_j^k-\mu_j^k\|.\end{equation}
If $\lim_{k\rightarrow\infty}\| \lambda_j^k-\mu_j^k\|=0$, noting that \myref{equ:stepdef} implies $\zeta^k\rightarrow0$, it holds that $\lim_{k\rightarrow\infty}\|\theta_j^{k+1}-\mu_j^k\|=0$ by taking limits on both sides of \myref{proof:toset}, which indicates that $\theta_j^{k+1}$ is infinitely approaching a feasible solution $\mu_j^k$.
\end{proof}
Finally, the last part is a stochastically ``decreasing'' result, whose proof is similar to that of \cite[Lemma 4]{you2016scenario}, and the detail is omitted here.
\begin{mylem}[Stochastically decreasing]\label{lemma:stodec}  Let $\mathcal{F}^k$ be a $\sigma$-field generated by the random variable $\{\omega_j^k,j\in\mathcal{V}\}$ up to time $k$. Under Assumptions \ref{assum:random} and \ref{assum:solvable}, it holds almost surely that for $\forall j\in\mathcal{V}$ and sufficiently large number $k$,
\begin{align}
\mathbb{E}&\left[\|\theta_j^{k+1}-\theta^*\|^2\right|\mathcal{F}^k]\notag\\
\leq& (1+R_1{(\zeta^k)}^2)\|\lambda_j^k-\theta^*\|^2-2\zeta^kc^T(\mu_j^k-\theta^*)\notag\\
\label{equ:l5res}
& -R_2\|\lambda_j^k-\mu_j^k\|^2+R_3{(\zeta^k)}^2.
\end{align}
where $\lambda_j^k$, $\mu_j^k$ are given in \myref{equ:l4def}, $\theta^*\in\Theta^*$, and $R_1, R_2, R_3$ are positive constants.
\end{mylem}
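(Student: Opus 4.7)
The plan is to unwind the two projection-type steps in \myref{equ:generic} in reverse order, using a Polyak-style non-expansive bound at each, and then to take the conditional expectation over the random index $\omega_j^k$.

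Starting from the outer projection \myref{equ:a2p3} and using $\theta^*\in\Theta^*\subset\Theta$, non-expansiveness of $\Pi_\Theta$, expansion of the squared norm, the subgradient inequality $(v_j^k)^T(p_j^k-\theta^*)\ge f_j(p_j^k)-f_j(\theta^*)\ge f_j(p_j^k)$ (valid because $f_j(\theta^*)\le 0$), and the displacement identity $\|q_j^k-p_j^k\|=\beta g_j^{\omega_j^k}(p_j^k)_+/\|u_j^k\|$ together yield a Polyak-type inequality of the form
\[
\|\theta_j^{k+1}-\theta^*\|^2 \le \|q_j^k-\theta^*\|^2 -\beta(2-\beta)\tfrac{f_j(p_j^k)_+^2}{\|v_j^k\|^2} + \text{(cross term)}.
\]
The analogous step for \myref{equ:a2p2} is exactly Lemma~\ref{lemma:toset} applied with $z=\theta^*$ (which satisfies $g_j(\theta^*)\preceq 0$), giving
\[
\|q_j^k-\theta^*\|^2 \le \|p_j^k-\theta^*\|^2 -\beta(2-\beta)\tfrac{g_j^{\omega_j^k}(p_j^k)_+^2}{\|u_j^k\|^2}.
\]
Using $p_j^k=\lambda_j^k-\zeta^k c$ I would then expand $\|p_j^k-\theta^*\|^2=\|\lambda_j^k-\theta^*\|^2-2\zeta^k c^T(\lambda_j^k-\theta^*)+(\zeta^k)^2\|c\|^2$ and split $c^T(\lambda_j^k-\theta^*)=c^T(\mu_j^k-\theta^*)+c^T(\lambda_j^k-\mu_j^k)$, so that Young's inequality $2\zeta^k|c^T(\lambda_j^k-\mu_j^k)|\le\epsilon\|\lambda_j^k-\mu_j^k\|^2+\epsilon^{-1}(\zeta^k)^2\|c\|^2$ already produces the desired $-2\zeta^k c^T(\mu_j^k-\theta^*)$ summand of \myref{equ:l5res}; the multiplicative perturbation $1+R_1(\zeta^k)^2$ in front of $\|\lambda_j^k-\theta^*\|^2$ arises from similar Young splittings that absorb the cross term above and the $O(\zeta^k)$ gap between $p_j^k$ and $\lambda_j^k$ inside the violation terms, using the Lipschitz-type control supplied by the bounded subgradients in Assumption~\ref{assum:random}(b).

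The main obstacle is converting the expected constraint-violation term into the geometric decrement $-R_2\|\lambda_j^k-\mu_j^k\|^2$. Conditioning on $\mathcal{F}^k$ and using Assumption~\ref{assum:random}(a) gives
\[
\mathbb{E}\bigl[g_j^{\omega_j^k}(p_j^k)_+^2\bigm|\mathcal{F}^k\bigr] = \tfrac{1}{\tau_j}\sum_{l=1}^{\tau_j}g_j^l(p_j^k)_+^2,
\]
and Assumption~\ref{assum:random}(b) yields uniform lower bounds on the Polyak-step coefficients via $\|u_j^k\|,\|v_j^k\|\le D$. After shifting from $p_j^k$ to $\lambda_j^k$ (absorbing $O(\zeta^k)$ errors into $R_3(\zeta^k)^2$), the remaining step is a local Hoffman-type error bound
\[
\sum_{l=1}^{\tau_j} g_j^l(\lambda_j^k)_+^2 + f_j(\lambda_j^k)_+^2 \ge \kappa\,\|\lambda_j^k-\mu_j^k\|^2
\]
valid on any fixed bounded neighbourhood of $\Theta_0$. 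This is precisely why the lemma is stated ``for sufficiently large $k$'': once $\lambda_j^k$ lies in such a neighbourhood, the regularity constant $\kappa$ provided by Assumption~\ref{assum:solvable} and the finite family of convex inequalities gives the $-R_2\|\lambda_j^k-\mu_j^k\|^2$ contribution. Collecting everything and folding all numerical constants into $R_1,R_2,R_3$ delivers \myref{equ:l5res}.
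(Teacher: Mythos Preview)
The paper does not actually prove this lemma: it only says the argument is ``similar to that of \cite[Lemma 4]{you2016scenario}'' and omits the details. Your overall outline---iterating the Polyak non-expansiveness bound through \myref{equ:a2p2}--\myref{equ:a2p3}, expanding $p_j^k=\lambda_j^k-\zeta^kc$, and absorbing cross terms via Young's inequality---is the expected route and matches that reference in spirit.

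There is, however, a genuine gap in your final step. The error bound you write,
\[
\sum_{l=1}^{\tau_j} g_j^l(\lambda_j^k)_+^2 + f_j(\lambda_j^k)_+^2 \;\ge\; \kappa\,\|\lambda_j^k-\mu_j^k\|^2,
\]
cannot hold as stated: by \myref{equ:l4def} $\mu_j^k=\Pi_{\Theta_0}(\lambda_j^k)$ is the projection onto the \emph{global} intersection $\Theta_0=\bigcap_{i=1}^n\Theta_i$, while the left side involves only node $j$'s constraints. If $\lambda_j^k\in\Theta_j\setminus\Theta_0$ the left side vanishes but the right side is positive. At best such a bound controls $d(\lambda_j^k,\Theta_j)^2$; passing from there to $d(\lambda_j^k,\Theta_0)^2$ requires a linear-regularity hypothesis of the type $d(x,\Theta_0)\le\kappa\max_i d(x,\Theta_i)$, which is not among Assumptions~\ref{assum:random}--\ref{assum:strength}. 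In particular, Assumption~\ref{assum:solvable} is merely nonemptiness of $\Theta_0$ and $\Theta^*$ and does \emph{not} supply a Hoffman constant (two tangent disks have nonempty intersection yet fail linear regularity). The argument in \cite{you2016scenario} carries an explicit regularity assumption that has been suppressed in the present paper; your proposal should invoke that hypothesis directly rather than attribute the error bound to Assumption~\ref{assum:solvable}.
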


The proof also relies crucially on the well-known super-martingale convergence Theorem, which is due to \cite{robbins1985convergence}, see also \cite[Proposition A.4.5]{bertsekas2015convex}. This result is now restated for completeness and after which all previous results are summarized in a proposition.
\begin{mythm}[Super-martingale convergence]\label{thm:superm} Let $\{v_k\}$, $\{u_k\}$, $\{a_k\}$ and $\{b_k\}$ be sequences of nonnegative random variables such that
\begin{equation}
\mathbb{E}\left[v_{k+1}|\mathcal{F}_k\right]\leq (1+a_k)v_k-u_k+b_k
\end{equation}
where $\mathcal{F}_k$ denotes the collection $v_0,\dots,v_k$, $u_0,\dots,u_k$, $a_0,\dots,a_k$, $b_0,\dots,b_k$. Let $\sum_{k=0}^\infty a_k<\infty$ and $\sum_{k=0}^\infty b_k<\infty$ almost surely. Then, we have $\lim_{k\rightarrow\infty}v_k=v$ for a random variable $v\geq0$ and $\sum_{k=0}^\infty u_k<\infty$ almost surely.
\end{mythm}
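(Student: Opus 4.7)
The plan is to reduce Theorem~\ref{thm:superm} to the classical Doob convergence theorem for nonnegative supermartingales. This reduction has two ingredients: (i) a pathwise rescaling that absorbs the multiplicative factor $(1+a_k)$, and (ii) a stopping-time truncation that neutralizes the signed contribution of the additive term $b_k$. The rescaling exploits the fact that both $a_k$ and $b_k$ are $\mathcal{F}_k$-measurable and that $\sum_k a_k<\infty$ almost surely.

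First, I would set $\alpha_0=1$ and $\alpha_k=\prod_{j=0}^{k-1}(1+a_j)^{-1}$, which is $\mathcal{F}_{k-1}$-measurable. Using $\log(1+x)\le x$ for $x\ge0$, we have $\log(\alpha_k^{-1})\le\sum_{j=0}^{k-1}a_j$, so the monotone sequence $\{\alpha_k\}$ decreases almost surely to some random $\alpha_\infty\in(0,1]$. Multiplying the hypothesis by $\alpha_{k+1}$ and exploiting the telescoping identity $\alpha_{k+1}(1+a_k)=\alpha_k$ yields
\begin{equation*}
\mathbb{E}[X_{k+1}\mid\mathcal{F}_k]\le X_k-U_k+B_k,
\end{equation*}
where $X_k:=\alpha_k v_k\ge0$, $U_k:=\alpha_{k+1}u_k\ge0$, $B_k:=\alpha_{k+1}b_k\ge0$, and $\sum_k B_k<\infty$ a.s.\ because $\alpha_{k+1}\le1$.

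Next, I would form the telescoped process $Z_k=X_k+\sum_{j=0}^{k-1}U_j-\sum_{j=0}^{k-1}B_j$. The above inequality rearranges to $\mathbb{E}[Z_{k+1}\mid\mathcal{F}_k]\le Z_k$, so $\{Z_k\}$ is a supermartingale. The main obstacle is that $Z_k$ is \emph{not} guaranteed to be nonnegative, because the subtracted term $\sum B_j$ is itself random, so Doob's theorem does not apply directly. I would handle this by truncation: for each $M>0$ set $\tau_M=\inf\{k:\sum_{j=0}^{k-1}B_j>M\}$, which is a stopping time since $B_j\in\mathcal{F}_j$. On $\{k\le\tau_M\}$ one has $Z_k\ge-M$, so $Z_{k\wedge\tau_M}+M$ is a nonnegative supermartingale and converges almost surely to a finite limit. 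Since $\sum_j B_j<\infty$ a.s., $\mathbb{P}(\tau_M=\infty)\uparrow1$ as $M\to\infty$, and on $\{\tau_M=\infty\}$ we have $Z_k=Z_{k\wedge\tau_M}$ for every $k$. Taking a union over $M\in\mathbb{N}$ shows $Z_k$ converges almost surely to a finite limit.

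Finally, since $\sum_{j=0}^{k-1}B_j$ converges a.s.\ by hypothesis, convergence of $Z_k$ forces convergence of $X_k+\sum_{j=0}^{k-1}U_j$. As $X_k\ge0$ and the partial sums of $U_j$ are nondecreasing, both terms must converge separately: $\sum_j U_j<\infty$ a.s.\ and $X_k\to X_\infty\ge0$ a.s. Dividing by $\alpha_k\to\alpha_\infty>0$ gives $v_k\to v:=X_\infty/\alpha_\infty\ge0$ a.s., and $\alpha_{j+1}\ge\alpha_\infty>0$ with $\sum_j U_j<\infty$ yields $\sum_j u_j<\infty$ a.s., which completes the argument.
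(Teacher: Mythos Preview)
The paper does not supply its own proof of Theorem~\ref{thm:superm}; it merely restates the classical Robbins--Siegmund super-martingale convergence result and cites \cite{robbins1985convergence} and \cite[Proposition~A.4.5]{bertsekas2015convex}. So there is no in-paper argument to compare against.

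Your proof is correct and is in fact the standard route to this result. The two ingredients you identify---the multiplicative rescaling $\alpha_k=\prod_{j<k}(1+a_j)^{-1}$ to remove the $(1+a_k)$ factor, and the stopping-time localization $\tau_M$ to force nonnegativity of the compensated process---are exactly the devices used in the original Robbins--Siegmund argument. One small point worth making explicit: the measurability you need is that $\alpha_{k+1}$ be $\mathcal{F}_k$-measurable (not just $\mathcal{F}_{k+1}$), which holds because $a_k\in\mathcal{F}_k$; you use this when pulling $\alpha_{k+1}$ through the conditional expectation, and it is correct, but stating it prevents any confusion with your earlier remark that $\alpha_k$ is $\mathcal{F}_{k-1}$-measurable. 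A second minor technicality is that the definition of a supermartingale formally requires integrability of each $Z_{k\wedge\tau_M}$; this is typically handled either by a further localization (stop also when $X_k$ exceeds a level $N$) or by invoking the version of Doob's upcrossing inequality valid for nonnegative processes without integrability. Either fix is routine and does not affect the structure of your argument.
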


\begin{myprop}[Convergent results]\label{prop}Under Assumption \ref{assum:random}-\ref{assum:strength} and let $\tillambda^k=\sum_{j=1}^n\pi_j^{k+1}\mu_j^k$, $\tilmu^k=\sum_{j=1}^n\pi_j^{k+1}\mu_j^k$, and $\bartheta^k=\sum_{j=1}^n\pi_j^k\theta_j^k$, where $\lambda_j^k$ and $\mu_j^k$ are given in \myref{equ:l4def}. Then, for any $\theta^*\in\Theta^*$ and $j\in\mathcal{V}$, the following statements hold in the almost sure sense,
\begin{enumerate}\renewcommand{\labelenumi}{\rm(\alph{enumi})}
\item $\{\sum_{j=1}^n\pi_j^k\|\theta_j^k-\theta^*\|^2\}$ converges as $k\rightarrow\infty$.
\item $\lim\inf_{k\rightarrow\infty}c^T\tilmu^k=c^T\theta^*$.
\item $\lim_{k\rightarrow\infty}\|\mu_j^k-\lambda_j^k\|=0$.
\item $\lim_{k\rightarrow\infty}\|\mu_j^k-\theta_j^{k+1}\|=\lim_{k\rightarrow\infty}\|\lambda_j^k-\theta_j^{k+1}\|=0$.
\item $\lim_{k\rightarrow\infty}\|\tilmu^k-\bartheta^{k+1}\|=\lim_{k\rightarrow\infty}\|\tillambda^k-\bartheta^{k+1}\|=0$.
\end{enumerate}
\end{myprop}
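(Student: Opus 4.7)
The plan is to derive a single weighted super-martingale recursion from the per-node bound in Lemma \ref{lemma:stodec}, and then read off all five statements (a)--(e) as essentially immediate consequences. Put $v_k := \sum_{j=1}^n \pi_j^k \|\theta_j^k - \theta^*\|^2$ for a fixed $\theta^* \in \Theta^*$. Multiplying \eqref{equ:l5res} by $\pi_j^{k+1}$ and summing over $j$, the key algebraic step is to convert $\sum_j \pi_j^{k+1} \|\lambda_j^k - \theta^*\|^2$ into $v_k$: row-stochasticity of $A^k$ together with convexity of $\|\cdot\|^2$ gives $\|\lambda_j^k - \theta^*\|^2 \leq \sum_i a_{ji}^k \|\theta_i^k - \theta^*\|^2$, after which Lemma \ref{thm:ergodic}(c), namely $\pi_i^k = \sum_j \pi_j^{k+1} a_{ji}^k$, collapses the double sum to $v_k$. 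Using $\sum_j \pi_j^{k+1} = 1$ on the remaining terms, this produces
\begin{equation*}
\mathbb{E}[v_{k+1}\mid\mathcal{F}^k] \leq (1 + R_1(\zeta^k)^2)\,v_k - u_k + R_3(\zeta^k)^2,
\end{equation*}
where $u_k := 2\zeta^k c^T(\tilmu^k - \theta^*) + R_2 \sum_j \pi_j^{k+1} \|\lambda_j^k - \mu_j^k\|^2 \geq 0$; nonnegativity of the first summand uses $\mu_j^k \in \Theta_0$ together with the optimality of $\theta^*$.

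Since $\sum_k (\zeta^k)^2 < \infty$ by \eqref{equ:stepdef}, the super-martingale convergence Theorem \ref{thm:superm} applies and delivers (a) directly, along with $\sum_k u_k < \infty$ almost surely. Splitting the two nonnegative pieces of $u_k$ then yields (b) and (c) at once: the bound $\sum_k \zeta^k c^T(\tilmu^k - \theta^*) < \infty$ combined with $\sum_k \zeta^k = \infty$ and the nonnegativity of $c^T(\tilmu^k - \theta^*)$ forces $\liminf_k c^T\tilmu^k = c^T\theta^*$, proving (b); and the bound $\sum_k \sum_j \pi_j^{k+1} \|\lambda_j^k - \mu_j^k\|^2 < \infty$ together with the uniform lower bound $\pi_j^{k+1} \geq \eta > 0$ from Lemma \ref{thm:ergodic}(b) gives $\|\lambda_j^k - \mu_j^k\| \to 0$ for every $j$, proving (c).

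Statement (d) is obtained by feeding (c) into Lemma \ref{lemma:feasi} to conclude $\|\mu_j^k - \theta_j^{k+1}\| \to 0$, after which the triangle inequality yields $\|\lambda_j^k - \theta_j^{k+1}\| \to 0$. For (e), the identity $\tillambda^k - \bartheta^{k+1} = \sum_j \pi_j^{k+1}(\lambda_j^k - \theta_j^{k+1})$, bounded in norm by $\sum_j \pi_j^{k+1}\|\lambda_j^k - \theta_j^{k+1}\|$, and the analogous identity for $\tilmu^k$, reduce (e) to (d) since $\pi_j^{k+1} \leq 1$. The only delicate part I anticipate is the opening manoeuvre: the passage from $\pi_j^{k+1}$-weighted sums at time $k+1$ to $\pi_i^k$-weighted sums at time $k$ via Lemma \ref{thm:ergodic}(c) is what makes the recursion telescope into a genuine super-martingale in $v_k$, and Jensen's inequality must be applied precisely where row-stochasticity of $A^k$ is available, so the bookkeeping between the two time-varying stochastic vectors $\pi^k$ and $\pi^{k+1}$ must be done carefully.
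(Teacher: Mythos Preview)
Your proposal is correct and follows essentially the same route as the paper: weight the per-node bound from Lemma~\ref{lemma:stodec} by $\pi_j^{k+1}$, use Jensen together with $(\pi^k)^T=(\pi^{k+1})^TA^k$ to telescope into a super-martingale for $v_k=\sum_j\pi_j^k\|\theta_j^k-\theta^*\|^2$, apply Theorem~\ref{thm:superm}, and then read off (a)--(e) exactly as you describe. Your labeling of Lemma~\ref{thm:ergodic}(c) for the recursion $\pi_i^k=\sum_j\pi_j^{k+1}a_{ji}^k$ and (b) for the lower bound $\pi_j^{k+1}\ge\eta$ is in fact more accurate than the paper's own cross-references.
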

\begin{proof} By the convexity of $\|\cdot\|^2$ and the row stochasticity of $A^k$, i.e, $\sum_{i=1}^na_{ji}^k=1$, it follows that
\begin{equation*}
\|\lambda_j^k-\theta^*\|^2\leq\sum_{i=1}^na_{ji}^k\|\theta_i^k-\theta^*\|^2.
\end{equation*}
Jointly with \myref{equ:l5res}, we obtain that  for sufficiently large $k$,
\begin{align}
\mathbb{E}&\left[\|\theta_j^{k+1}-\theta^*\|^2\mathcal{F}^k\right]\notag\\ \leq&(1+R_1{(\zeta^k)}^2)\sum_{i=1}^na_{ji}^k\|\theta_i^k-\theta^*\|^2-2\zeta^kc^T(\mu_j^k-\theta^*)\notag\\
\label{equ:p1pr1}
& -R_2\|\lambda_j^k-\mu_j^k\|^2+R_3{(\zeta^k)}^2.
\end{align}
Premultiply $\pi^{k+1}_j$ on both sides of \myref{equ:p1pr1} and sum up on $j$,
\begin{align}
\mathbb{E}&\big[\sum_{j=1}^n\pi_j^{k+1}\|\theta_j^{k+1}-\theta^*\|^2|\mathcal{F}^k\big]\notag\\
\leq&(1+R_1{(\zeta^k)}^2)\sum_{j=1}^n\sum_{i=1}^n\pi_j^{k+1}a_{ji}^k\|\theta_i^k-\theta^*\|^2-2\zeta^kc^T\notag\\
&\cdot(\tilmu^k-\theta^*)-R_2\sum_{j=1}^n\pi_j^{k+1}\|\lambda_j^k-\mu_j^k\|^2+R_3{(\zeta^k)}^2\notag\\
=&(1+R_1{(\zeta^k)}^2)\sum_{j=1}^n\pi_j^k\|\theta_j^k-\theta^*\|^2-2\zeta^kc^T(\tilmu^k-\theta^*)\notag\\
\label{equ:p1pr2}
&-R_2\sum_{j=1}^n\pi_j^{k+1}\|\lambda_j^k-\mu_j^k\|^2+R_3{(\zeta^k)}^2,
\end{align}
the last equality holds due to Lemma \ref{thm:ergodic}(b). It follows from \myref{equ:stepdef} that $\sum_{k=0}^\infty R_1{(\zeta^k)}^2<\infty$ and $\sum_{k=0}^\infty R_3{(\zeta^k)}^2<\infty$. Notice the convexity of $\Theta_0$ and  $\mu_j^k\in\Theta_0$, it is clear that $\tilmu^k\in\Theta_0$. In view of the fact that $\theta^*$ is one optimal solution in $\Theta_0$, it holds that $c^T\tilmu^k-c^T\theta^*\geq 0$. Thus, all the conditions in Theorem \ref{thm:superm} are satisfied. Therefore, it holds almost surely that $\{\sum_{j=1}^n\pi_j^k\|\theta_j^k-\theta^*\|^2\}$ converges for any $j\in\mathcal{V}$ and $\theta^*\in\Theta^*$. Hence, Proposition \ref{prop}(a) is proved. Moreover, it follows from Theorem \ref{thm:superm} that
\begin{equation}
\label{equ:l7rb}
\sum_{k=0}^\infty\zeta^kc^T(\tilmu^k-\theta^*)<\infty
\end{equation}
and
\begin{equation}
\label{equ:l7rc}
\sum_{k=0}^\infty\sum_{j=1}^n\pi_j^{k+1}\|\lambda_j^k-\mu_j^k\|^2<\infty.
\end{equation}
It is clear that \myref{equ:l7rb} directly implies Proposition \ref{prop}(b) under the condition $c^T\tilmu^k-c^T\theta^*\geq0$. We know that $\pi_i^k\geq\eta>0$ from Lemma \ref{thm:ergodic}(c), then \myref{equ:l7rc} directly shows that $\lim_{k\rightarrow\infty}\|\lambda_j^k-\mu_j^k\|^2=0$ for any $j\in\mathcal{V}$. Thus, Proposition \ref{prop}(c) is proved. Combining the result in Proposition \ref{prop}(c) with Lemma \ref{lemma:feasi}, it is clear that Proposition \ref{prop}(d) holds as well. As for Proposition \ref{prop}(e), we notice that
\begin{align}
\|\tilmu^k-\bartheta^{k+1}\|&=\|\sum_{j=1}^n\pi_j^{k+1}\mu_j^k-\sum_{j=1}^n\pi_j^{k+1}\theta^{k+1}\|\notag\\
&\leq\sum_{j=1}^n\pi_j^{k+1}\|\mu_j^k-\theta_j^{k+1}\|.\notag
\end{align}
By taking limits on both sides and using the result in Proposition 2(d), we have $\lim_{k\rightarrow\infty}\|\tilmu^k-\bartheta^{k+1}\|=0$, and it similarly holds that $\lim_{k\rightarrow\infty}\|\tillambda^k-\bartheta^{k+1}\|=0$.
\end{proof}

Now, we are in a position to prove Theorem \ref{thm:convergent}.

{\em Proof of Theorem \ref{thm:convergent}.} Notice that $\lambda_j^k=\sum_{i=1}^na_{ji}^k\theta_j^k$, it follows from Proposition \ref{prop}(d) that $\lim_{k\rightarrow\infty}\|\theta_j^{k+1}-\sum_{i=1}^na_{ji}^k\theta_j^k\|=0$. Then it holds from Lemma \ref{lemma:consensus}  almost surely that $\lim_{k\rightarrow\infty}\|\theta_j^k-\bartheta^k\|=0$. Together with Proposition \ref{prop}(a) and the row-stochasticity of $A^k$, we obtain that $\{\|\bartheta^k-\theta^*\|\}$ converges. We know from Proposition \ref{prop}(e) that $\tilmu^k\rightarrow\bartheta^{k-1}$ as $k\rightarrow\infty$, so $\{\|\tilmu^k-\theta^*\|\}$ converges as well. Consider Proposition \ref{prop}(b), it implies that there exists a subsequence $\{\tilmu^k|k\in\mathcal{K}\}$ that converges almost surely to some point in the optimal set $\Theta^*$, which is denoted as $\theta_0^*$, and it holds clearly that
$$
\lim_{k\in\mathcal{K},k\rightarrow\infty}\|\tilmu^k-\theta_0^*\|=0.
$$
Since $\{\|\tilmu^k-\theta_0^*\|\}$ converges, it follows that $\lim_{k\rightarrow\infty}\|\tilmu^k-\theta_0^*\|=0$. Finally, we note that $\|\theta_j^{k+1}-\theta_0^*\|\leq\|\theta_j^{k+1}-\bartheta^{k+1}\|+\|\bartheta^{k+1}-\tilmu^k\|+\|\tilmu^k-\theta_0^*\|$, which converges almost surely to zero as $k\rightarrow\infty$. Therefore, there exists $\theta_0^*\in\Theta^*$ such that $\lim_{k\rightarrow\infty}\theta_j^k=\theta_0^*$ for all $j\in\mathcal{V}$ with probability one. Thus, Theorem \ref{thm:convergent} is proved.

\section{Conclusion}
\label{Conclusion}
In this work, we developed a simple structural algorithm  to collaboratively solve distributed constrained optimization problems over time-varying unbalanced digraphs. Convergence results are rigorously shown using stochastic theory. The main drawback of the proposed algorithm is that the number of the augmented variables depends on the scale of topology in agreement with previous research on the topic. Future work will focus on reducing the number of augmented variables and speeding up the convergent rate.

\bibliographystyle{IEEEtrans}
\bibliography{mybibf}
\end{document}